\newtheorem{defi}{Definition}[section]
\newtheorem{theo}[defi]{Theorem}
\newtheorem{lem}[defi]{Lemma}
\newtheorem{rem}[defi]{Remark}
\newcommand{\SP}{\ensuremath{{\mathcal{SP}}}}
\newcommand{\pin}[2]{
\fill (#1,#2) circle (.2);
}
\newcommand{\pinU}[2]{
\fill (#1,#2) circle (.2);
\draw [thick] (#1,#2) -- (#1,#2-.45); 
}
\newcommand{\nameUnder}[3]{
\draw (#1,#2-1) node {#3};
}
\newcommand{\pinL}[2]{
\fill (#1,#2) circle (.2);
\draw [thick] (#1,#2) -- (#1+.45,#2); 
}
\newcommand{\pinR}[2]{
\fill (#1,#2) circle (.2);
\draw [thick] (#1,#2) -- (#1-.45,#2); 
}
\newcounter{indice}
\newcommand{\permutation}[1]{
\setcounter{indice}{0};
\foreach \i in {#1} 
\addtocounter{indice}{1};

\addtocounter{indice}{1}
\draw [help lines] (1,1) grid (\theindice,\theindice);

\setcounter{indice}{1};

\foreach \i in { #1 } {
\draw (\theindice+.5,\i+.5) [fill] circle (.2);
\addtocounter{indice}{1};
}
\addtocounter{indice}{-1};

}
\begin{document}

\title{Deciding the finiteness of the number of simple permutations contained in a
  wreath-closed class is polynomial\footnote{This work was completed
    with the support of the ANR project GAMMA number 07-2\_195422}.}
\author{Fr\'ed\'erique Bassino\\
LIPN UMR 7030, Universit\'e
   Paris 13 and CNRS, \\ 99, avenue J.- B. Cl\'ement, 93430 Villetaneuse, France.\\
  \and  Mathilde Bouvel\\
LaBRI UMR 5800, Universit\'e de Bordeaux and
 CNRS,\\ 351, cours de la Libération, 33405 Talence cedex, France.\\
\and  Adeline  Pierrot\\
LIAFA UMR 7089, Universit\'e Paris Diderot and
 CNRS,\\ Case 7014, 75205 Paris cedex 13, France.\\
\and Dominique Rossin\\
LIX UMR 7161, Ecole Polytechnique and CNRS, \\91128  Palaiseau, France.
}

\date{~} 

\maketitle

\begin{abstract}
We present an algorithm running in time ${\mathcal O}(n \log n)$ which
decides if a wreath-closed permutation class $Av(B)$ given by its
finite basis $B$ contains a finite number of simple permutations. The
method we use is based on an article of Brignall, Ru{\v{s}}kuc and
Vatter \cite{BRV06} 
which presents a decision procedure (of high complexity) for solving
this question, without the assumption that $Av(B)$ is
wreath-closed. Using combinatorial, algorithmic and language theoretic
arguments together with one of our previous results on
pin-permutations \cite{BBR09}, we are able to transform the problem into a
co-finiteness problem in a complete deterministic automaton.
\end{abstract}

\section{Introduction}

Permutation classes were first introduced in the literature by Knuth
in \cite{Knuth:ArtComputerProgramming:1:1973}, where the class of
permutations sortable through one stack is characterized as the
permutations avoiding the pattern $231$. This result has been the
starting point of the study of permutation classes and
pattern-avoiding permutations in combinatorics. The study of
permutation classes has been mostly interested in enumeration
questions as testified by the survey \cite{KiMa03} and its
references. The predominance of the counting questions certainly finds
an explanation in the Stanley-Wilf conjecture, stating that the
enumeration sequence $(s_n)_n$ of any (non trivial) permutation class
is at most simply exponential in the length $n$ of the permutations
(as opposed to $n!$ in general). This conjecture has been proved by
Marcus and Tardos in 2004 \cite{MaTa04}, and this result can be
considered as one of the first general results on permutation classes,
that is to say a result that deals with \emph{all} permutation
classes. More recently, some other general results dealing with wide
families of permutation classes have been described
\cite{AA05,ALR05,ARS09,BHV06a,BRV06,Vat05}. In particular, Albert and
Atkinson \cite{AA05} proved some sufficient conditions for the
generating function $S(x) = \sum s_n x^n$ of a class to be
algebraic. This is also the direction chosen in this article, where we
are interested in describing an efficient algorithm to decide the
finiteness of the number of simple permutations in a class, for
\emph{any} wreath-closed permutation class.

To be more precise, in a series of three articles
\cite{BHV06a,BHV06b,BRV06} Brignall {\em et al.}  prove that it is
decidable to know if a permutation class of finite basis contains a
finite number of simple permutations, which is a sufficient condition
for the generating function to be algebraic.  Every algorithm involved
in this decision procedure is polynomial except the algorithm deciding
if the class contains arbitrarily long proper pin-permutations.

In \cite{BBR09} a detailed study of pin-permutations is performed.  We
use some of the properties of the simple pin-permutations established
in \cite{BBR09} to give a polynomial-time algorithm for the preceding
question in the restricted case of wreath-closed permutation classes,
that is to say the classes of permutations whose bases contain only
simple permutations.  More precisely, we give a ${\mathcal O}(n \log
n)$ algorithm to decide if a finitely based wreath-closed class of
permutations $Av(\pi^{(1)},\ldots,\pi^{(k)})$ contains a finite number
of simple permutations where $n = \sum |\pi^{(i)}|$. A key ingredient
of this procedure is the transformation of a containment relation
involving permutations into a factor relation between words. As a
consequence deciding the finiteness of the number of proper
pin-permutations is changed into testing the co-finiteness of a
regular language given by a complete deterministic automaton.

The paper is organized as follows. We first recall basic definitions
and known results that will be used in the sequel. In
Section~\ref{sec:containment} we establish, in the special case of
simple patterns and proper pin-permutations, some links between pattern
containment relation on permutations and factor relation
between words. Finally Section~\ref{sec:complexity} is devoted to the
presentation of a polynomial algorithm deciding the finiteness of the
number of proper pin-permutations contained in a wreath-closed
permutation class.
  
\section{Background}\label{sec:preliminary}

\subsection{Definitions}
We recall in this section a few definitions about permutations, pin
representations and pin words. More details can be found in
\cite{BHV06a,BRV06, BBR09}. A permutation $\sigma \in S_n$ is a
bijective function from $\{1,\ldots ,n\}$ onto $\{1,\ldots ,n\}$. We
either represent a permutation by a word $\sigma=2\,3\,1\,4$ or its
{\it diagram} (see Figure \ref{fig:definitions}). A permutation $\pi =
\pi_1 \pi_2 \ldots \pi_k$ is a {\it pattern} of a permutation $\sigma
= \sigma_1 \sigma_2 \ldots \sigma_n$, and we write $\pi \leq \sigma$
if and only if there exist $1 \leq i_1 < i_2 < \ldots < i_k \leq n$
such that $\sigma_{i_1}\ldots \sigma_{i_k}$ is order isomorphic to
$\pi$.  We also say that $\sigma$ \emph{involves} or \emph{contains}
$\pi$.  If $\pi$ is not a pattern of $\sigma$ we say that $\sigma$
{\it avoids} $\pi$. A permutation class $Av(B)$ -- where $B$ is a
finite or infinite antichain of permutations called the {\it basis} --
is the set of all permutations avoiding every element of $B$. A
permutation is called {\it simple} if it contains no block,
\emph{i.e.} no mapping from $\{i,\ldots ,(i+l)\}$ to $\{j,\ldots
,(j+l)\}$, except the trivial ones corresponding to $l=0$ or $i=j=1$
and $l=n-1$.  Wreath-closed permutation classes have been introduced
in \cite{AA05} in terms of substitution- or wreath-product of
permutations. This original definition is not crucial to our work, and
we prefer to define them by the characterization proved in
\cite{AA05}: a permutation class $Av(B)$ is said to be {\em
  wreath-closed} when its basis $B$ contains only simple permutations.

In the following we study wreath-closed classes with finite basis.
Note that it is not a restriction for our purpose: from \cite{AA05}, when the
basis is infinite we know that the number of simple permutations in the class is
infinite.  Our goal is indeed to check whether a wreath-closed class contains a finite
number of simple permutations, ensuring in this way that its
generating function is algebraic \cite{AA05}. As we shall see in the
following, a class of particular permutations, called the
pin-permutations, plays a central role in the decision procedure of
this problem. For this reason, we record basic definitions and results
related with these pin-permutations.

A pin in the plane is a point at integer coordinates. A pin $p$ {\em
  separates} - horizontally or vertically - the set of pins $P$ from
the set of pins $Q$ if and only if a horizontal - resp. vertical -
line drawn across $p$ separates the plane into two parts, one of which
contains $P$ and the other one contains $Q$. A pin sequence is a
sequence $(p_1,\ldots,p_k)$ of pins in the plane such that no two
points lie in the same column or row and for all $i \geq 2$, $p_i$
lies outside the bounding box of $\{p_1,\ldots ,p_{i-1}\}$ and
respects one of the following conditions:
\begin{itemize}
\item $p_i$ separates $p_{i-1}$ from $\{p_1,\ldots,p_{i-2}\}$.
\item $p_i$ is independent from $\{p_1,\ldots,p_{i-1}\}$, {\em i.e.},
  it does not separate this set into two non empty sets.
\end{itemize}
A pin sequence represents a permutation $\sigma$ if and only if it is
order isomorphic to its diagram. We say that a permutation $\sigma$ is
a \emph{pin-permutation} if it can be represented by a pin sequence,
which is then called a \emph{pin representation} of $\sigma$. Not all
permutations are pin-permutations (see for example the permutation
$\sigma$ of Figure \ref{fig:definitions}).
\begin{figure}[htbp]
  \begin{center}
    \begin{tikzpicture}
      \begin{scope}[scale=.3]
        \permutation{4,7,2,6,3,1,5} \draw (1.5,4.5) circle (.4 cm);
        \draw (2.5,7.5) circle (.4 cm); \draw (3.5,2.5) circle (.4
        cm); \draw (5.5,3.5) circle (.4 cm); \draw (6.5,1.5) circle
        (.4 cm); \draw (7.5,5.5) circle (.4 cm);
      \end{scope}
      \begin{scope}[xshift=2.5cm,scale=.3]
        \permutation{4,6,2,3,1,5}
      \end{scope}
      \begin{scope}[xshift=5cm,scale=.3]
        \draw (1,1) [help lines] grid +(6,6); \pin{4.5}{3.5}
        \nameUnder{4.5}{3.5}{$p_1$} \pin{5.5}{1.5}
        \nameUnder{5.5}{1.5}{$p_2$} \pinL{3.5}{2.5}
        \nameUnder{3.5}{2.5}{$p_3$} \pin{1.5}{4.5}
        \nameUnder{1.5}{4.5}{$p_4$} \pinU{2.5}{6.5}
        \nameUnder{2.5}{6.5}{$p_5$} \pinR{6.5}{5.5}
        \nameUnder{6.5}{5.5}{$p_6$}
      \end{scope}
    \end{tikzpicture}
    \caption{The permutation $\sigma=4\,7\,2\,6\,3\,1\,5$, the pattern
      $\pi=4\,6\,2\,3\,1\,5$ and a pin representation of
      $\pi$. $14L2UR$ (if we place $p_{0}$ between $p_{3}$ and
      $p_{1}$) and $3DL2UR$ are pin words corresponding to this pin
      representation.}
    \label{fig:definitions}
  \end{center}
\end{figure}

A {\it proper} pin representation is a pin representation in which
every pin $p_i$, for $i\geq 3$, separates $p_{i-1}$ from $\{p_1,
\ldots, p_{i-2}\}$. A {\it proper} pin-permutation is a permutation
that admits a proper pin representation.

\begin{rem}\label{rem:simplepin}
  A pin representation of a simple pin-permutation is always proper as
  any independent pin $p_i$ with $i\geq 3$ creates a block
  corresponding to $\{p_1, \ldots, p_{i-1}\}$.
\end{rem}

Pin representations can be encoded by words on the alphabet
$\{1,2,3,4,U,D,L,R\}$ called {\it pin words}. Consider a pin
representation $(p_1,\ldots,p_n)$ and choose an arbitrary origin $p_0$
in the plane such that it extends the pin representation to a pin
sequence $(p_0,p_1,\ldots,p_n)$. Then every pin $p_1,\ldots,p_n$ is
encoded by a letter according to the following rules:
\begin{itemize}
\item The letter associated to $p_i$ is $U$ -resp. $D,L,R$- if and
  only if $p_i$ separates $p_{i-1}$ and $\{p_0,p_1,\ldots,p_{i-2}\}$
  from the top -resp. bottom, left, right-.
\item The letter associated to $p_i$ is $1$ -resp. $2,3,4$- if and
  only if $p_i$ is independent from $\{p_0,p_1,\ldots,p_{i-1}\}$ and
  is situated in the up-right -resp. up-left, bottom-left,
  bottom-right- corner of the bounding box of
  $\{p_0,p_1,\ldots,p_{i-1}\}$.
\end{itemize}
This encoding is summarized by Figure \ref{fig:quadrant}.  The region
encoded by $1$ is called the first {\em quadrant}. The same goes for
$2,3,4$.  The letters $L,R,U,D$ are called {\em directions}, while
$1,2,3$ and $4$ are {\em numerals}. An important remark is that the
definition of pin words implies that they do not contain any of the
factors $UU, UD, DU, DD, LL, LR, RL$ and $RR$.
\begin{figure}[htbp]
  \begin{minipage}[t]{.5\linewidth}
    \begin{center}
      \begin{tikzpicture}[scale=.6]
        \useasboundingbox (0,-1) (3,3); \draw[help lines] (0,0) grid
        +(3,3); \fill (1,1) rectangle +(1,1); \draw (0.5,0.5) node
        {3}; \draw (1.5,0.5) node {D}; \draw (2.5,0.5) node {4}; \draw
        (2.5,1.5) node {R}; \draw (2.5,2.5) node {1}; \draw (1.5,2.5)
        node {U}; \draw (0.5,2.5) node {2}; \draw (0.5,1.5) node {L};
      \end{tikzpicture}
      \caption{Encoding of pins by letters.}
      \label{fig:quadrant}
    \end{center}
  \end{minipage}
  \begin{minipage}[t]{.5\linewidth}
    \begin{center}
      \begin{tikzpicture}[scale=.37]
        \draw (2,2) [fill] circle (.2); \draw
        (4,4) [fill] circle (.2); \draw (1.5,1.5) node
        {\begin{small}$p_1$\end{small}}; \draw (4.5,4.5)
        node {\begin{small}$p_2$\end{small}}; \draw (1,1) node
        {\begin{small}$11$\end{small}}; \draw (1,3) node
        {\begin{small}$41$\end{small}}; \draw (1,5) node
        {\begin{small}$4R$\end{small}}; \draw (3,1) node
        {\begin{small}$21$\end{small}}; \draw (3,3) node
        {\begin{small}$31$\end{small}}; \draw (3,5) node
        {\begin{small}$3R$\end{small}}; \draw (5,1) node
        {\begin{small}$2U$\end{small}}; \draw (5,3) node
        {\begin{small}$3U$\end{small}}; \draw[thick] (0,2) -- (6,2);
        \draw[thick] (0,4) -- (6,4); \draw[thick] (2,0) -- (2,6);
        \draw[thick] (4,0) -- (4,6);
      \end{tikzpicture}
      \caption{The two letters in each cell indicate the first two
        letters of the pin word encoding $(p_1, \ldots, p_n)$ when $p_0$ is
        taken in this cell.}
      \label{fig:origine}
    \end{center}
  \end{minipage}
\end{figure}

To each pin word corresponds a unique pin representation, hence a
unique permutation but each pin-permutation of length greater than $1$
has at least $6$ pin words associated to it.  The reason is that for
any pin representation, there are $8$ possible placements of $p_0$
w.r.t. $p_1$ and $p_2$, among which at least $6$ give a possible
prefix of a pin word (see Figure \ref{fig:origine} for an
example). On Figure \ref{fig:origine}, the two prefixes $4R$ and $3R$ (resp.. $2U$ and $3U$) may be
excluded, when $p_3$ is encoded by $R$ or $L$ (resp. $U$ or $D$).

A {\em strict} (resp. {\em quasi-strict}) pin word is a pin word of
length at least $2$ that begins by a numeral (resp. two numerals)
followed only by directions.

\begin{rem} \label{rem:proper_strict}
Encodings of proper pin-permutations~
\begin{itemize}
\item[a.]  Strict and quasi-strict pin
  words are the encodings of proper pin representations.
\item[b.]  However a pin-permutation is proper if and only if it
  admits a strict pin word among its encodings.
\end{itemize}
\end{rem}

The language $\SP$ of strict pin words can be described by the
following regular expression:
$$(1+2+3+4)
\Big( (\epsilon+ L+ R) (U+D) \big((L+R)(U+D)\big)^{*} + (\epsilon+ U+
D) (L+ R) \big((U+D)(L+R)\big)^{*} \Big).$$

\subsection{Some known results}

In \cite{BRV06} Brignall {\it et al.} studied conditions for a class
to contain an infinite number of simple permutations. Introducing
three new kinds of permutations they show that this problem is
equivalent to looking for an infinite number of permutations of one of
these three simpler kinds.

\begin{theo}\cite{BRV06}\label{thm:brignall}
 A permutation class $Av(B)$ contains an infinite number of simple
 permutations if and only if it contains either:
\begin{itemize}
 \item An infinite number of wedge simple permutations.
\item An infinite number of parallel alternations.
\item An infinite number of proper pin-permutations.
\end{itemize}
\end{theo}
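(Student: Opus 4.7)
The plan is to prove both implications of the equivalence, with the nontrivial direction being: infinitely many simples in $Av(B)$ forces one of the three specialized families to be infinite. The reverse implication is the easier one. Wedge simple permutations are simple by definition, so an infinite family of them is already an infinite family of simples in the class. Arbitrarily long parallel alternations and arbitrarily long proper pin-permutations both contain, as patterns, simple permutations of unbounded length: from a long parallel alternation one extracts simple "oscillating" sub-patterns of every length, and from a long proper pin-permutation one extracts long simple pin-permutations by inspecting strict pin-word encodings and trimming non-separating initial segments. Since $Av(B)$ is downward closed under the pattern order, all such patterns lie in the class, producing infinitely many distinct simples.

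For the forward direction, I would reduce the statement to a purely combinatorial structural lemma: there exists a function $f\colon \mathbb{N}\to\mathbb{N}$ such that every simple permutation $\sigma$ of length at least $f(n)$ contains, as a pattern, a simple permutation of length at least $n$ which is either a wedge simple permutation, a parallel alternation, or a proper pin-permutation. Granting this lemma, the conclusion follows by the pigeonhole principle: if $Av(B)$ contains simple permutations of arbitrarily large length, the lemma extracts in each of them a pattern of one of the three kinds of arbitrarily large length; some single kind must recur infinitely often, and pattern-closure of $Av(B)$ yields infinitely many permutations of that kind in the class.

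The main technical work therefore lies in establishing this structural lemma. My attack would combine a Ramsey-type analysis of the diagram of $\sigma$ with the theory of pin representations. Intuitively, a simple $\sigma$ that admits a long proper pin sequence already furnishes a large proper pin-permutation as a pattern (by truncating the sequence). Otherwise, every pin sequence of $\sigma$ is short relative to $|\sigma|$, forcing $\sigma$ to exhibit a long "repetitive backbone": rows and columns alternate in a regular fashion. A Ramsey argument on pairs of consecutive entries then produces either a long monotone alternation of two parallel patterns (a parallel alternation) or a long alternation of entries that all wedge around a common extremum (a wedge simple). Making this trichotomy effective, with an explicit (even if far from optimal) bound on $f(n)$, is the main obstacle; one route is via the Schmerl--Trotter theorem asserting that every simple permutation of length $k \geq 2$ contains a simple pattern of length $k-1$ or $k-2$, which provides a descending chain of simple sub-patterns inside $\sigma$ that can be pruned by Higman-type arguments to isolate one of the three canonical shapes.
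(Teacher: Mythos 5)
The paper itself does not prove this statement: Theorem~\ref{thm:brignall} is quoted from \cite{BRV06}, so the only benchmark is the original argument of Brignall, Ru\v{s}kuc and Vatter. Your overall architecture does coincide with theirs: the easy direction (each of the three families supplies arbitrarily long simple patterns, which lie in $Av(B)$ by closure under pattern containment), and the hard direction reduced to an unavoidable-substructure lemma --- every sufficiently long simple permutation contains a long parallel alternation, a long wedge simple permutation, or a long proper pin-permutation --- followed by pigeonhole over the three types. Two remarks on the easy direction: the fact that a long proper pin-permutation contains a long \emph{simple} permutation is itself a lemma of \cite{BRV06}, obtained essentially by retaining a suitable subset (roughly every other pin) of a proper pin sequence, not by ``trimming non-separating initial segments'' --- in a proper pin sequence every pin from $p_3$ onwards separates by definition, so there is nothing to trim; similarly, long parallel alternations are not simple but become so after deleting $O(1)$ points, which is the precise statement one needs.

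The genuine gap is that the structural lemma \emph{is} the theorem, and your treatment of it is a statement of intent rather than a proof. The tools you name are unlikely to deliver it: the Schmerl--Trotter theorem yields a chain of simple subpatterns whose lengths decrease by one or two, but gives no control on their shape, and ``Higman-type pruning'' presupposes a well-quasi-order, which is exactly what fails here --- the simple permutations contain infinite antichains (the three families of the statement are the canonical witnesses), so no such argument can isolate the canonical shapes. The actual proof in \cite{BRV06} (building on the companion papers of the same authors) proceeds by analysing maximal pin sequences inside an arbitrary simple permutation: either some proper pin sequence is long, and one is done, or all pin sequences are short, and this boundedness is shown to force a long horizontal or vertical alternation, which is then refined, by a genuinely combinatorial (not merely Ramsey-generic) case analysis, into either a parallel alternation or a wedge configuration yielding a wedge simple permutation. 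Without carrying out an argument of this kind, or a concrete substitute with an explicit bound $f(n)$, the central implication remains unproved.
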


The definitions of the wedge simple permutations and the parallel
alternations are not crucial to our work, hence we refer the reader to
\cite{BRV06} for more details. What is however important for our
purpose is to be able to test whether a class given by its finite
basis contains an infinite number of permutations of these
kinds. Alternations and wedge simple permutations are well
characterized in \cite{BRV06}, where it is shown that it is easy to
deal with this problem using the three following lemmas.

\begin{lem}\cite{BRV06}\label{lem:alternation}
The permutation class $Av(B)$ contains only finitely many parallel
alternations if and only if its basis $B$ contains an element of every
symmetry of the class $Av(123, 2413, 3412)$.
\end{lem}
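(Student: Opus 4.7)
The plan is to analyze parallel alternations symmetry class by symmetry class. Under the action of the dihedral symmetry group on permutation diagrams, the parallel alternations split into four symmetry families (the half-turn rotation preserves each family, so the eight dihedral symmetries yield four orbits). For each family $\mathcal{F}_i$, $i=1,\dots,4$, I would introduce the downward closure $\mathcal{C}_i$ of $\mathcal{F}_i$ under pattern containment. Within each $\mathcal{F}_i$ the alternations form an infinite chain $\alpha_1 \le \alpha_2 \le \cdots$ with $|\alpha_n|\to\infty$, each being a pattern of the next; consequently a permutation lies in $\mathcal{C}_i$ if and only if it is a pattern of some long enough element of $\mathcal{F}_i$.

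The heart of the argument is then to identify each $\mathcal{C}_i$ with one of the four symmetries of the class $Av(123,2413,3412)$. One inclusion is a direct verification: every parallel alternation in $\mathcal{F}_i$ avoids the three patterns corresponding to the symmetry attached to $\mathcal{F}_i$, which follows from a small case analysis on which of the two monotone subsequences in the alternation the entries of a hypothetical forbidden pattern would have to belong to. The opposite inclusion requires showing that every permutation avoiding those three patterns can be written as a merge of two monotone subsequences with one dominating the other (the values splitting into an initial and a final segment, each inducing one of the two subsequences), and therefore embeds as a pattern into some long enough alternation of $\mathcal{F}_i$.

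Once this correspondence between the $\mathcal{C}_i$ and the symmetries of $Av(123,2413,3412)$ is established, the lemma follows formally. If $B$ misses some $\mathcal{C}_i$, then $\mathcal{F}_i \subseteq \mathcal{C}_i \subseteq Av(B)$, and since $\mathcal{F}_i$ is infinite, $Av(B)$ contains infinitely many parallel alternations. Conversely, if for every $i$ there is some $\pi_i \in B \cap \mathcal{C}_i$, then by the chain property $\pi_i$ is a pattern of all sufficiently long elements of $\mathcal{F}_i$, so only finitely many elements of each $\mathcal{F}_i$ lie in $Av(B)$, and taking the union over the four families yields only finitely many parallel alternations in $Av(B)$.

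The main obstacle is the structural step in the second paragraph, namely the reverse inclusion identifying each $\mathcal{C}_i$ with the appropriate symmetry of $Av(123,2413,3412)$: one must prove that the three excluded patterns $123$, $2413$, $3412$ (up to dihedral symmetry) precisely capture those permutations that decompose as a merge of two monotone sequences with one above the other, which is the combinatorial content making the lemma go through.
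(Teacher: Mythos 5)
The paper itself contains no proof of this statement: it is quoted directly from \cite{BRV06} and used as a black box in the proof of Lemma~\ref{lem:complexity}, so there is nothing internal to compare you against. Judged on its own, your outline is the standard (and correct) strategy, and it is essentially how the result is established in \cite{BRV06}: identify the downward closure of each of the four orientation classes of parallel alternations with the corresponding symmetry of $Av(123,2413,3412)$, then use downward closedness and the fact that every member of an orientation class embeds in all sufficiently long alternations of that class to get the equivalence. Your third paragraph (the formal derivation from that identification) is sound, since each closure $\mathcal{C}_i$ is a class, so $B\cap\mathcal{C}_i=\varnothing$ indeed gives $\mathcal{F}_i\subseteq\mathcal{C}_i\subseteq Av(B)$. (A minor slip: the four families are not four orbits of the dihedral group but a single orbit of size four, the half-turn being the stabilizer; equivalently, $Av(123,2413,3412)$ is closed under reverse-complement, so it has exactly four distinct symmetric copies.)

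The genuine gap is the one you name yourself: the identification $\mathcal{C}_i = $ (a symmetry of) $Av(123,2413,3412)$ is asserted, not proved, and it is the entire mathematical content of the lemma --- in particular the reverse inclusion, that avoiding the three patterns forces the merge structure. It is true and fillable by a short case analysis, which you should carry out. For the orientation whose closure is the set of permutations of the form $\alpha\beta$ with $\alpha,\beta$ decreasing (these are exactly the patterns of long parallel alternations of that orientation, since the two runs of a long alternation interleave completely), take $\sigma$ avoiding $123$, $2413$, $3412$ and let $k$ be maximal with $\sigma_1>\dots>\sigma_k$. If the suffix were not decreasing, pick $k<p<q$ with $\sigma_p<\sigma_q$; avoidance of $123$ forces $\sigma_p<\sigma_k<\sigma_{k+1}$ and $\sigma_q<\sigma_{k+1}$, and then the four points $\sigma_k,\sigma_{k+1},\sigma_p,\sigma_q$ form $2413$ if $\sigma_q>\sigma_k$ and $3412$ if $\sigma_q<\sigma_k$, a contradiction; the other orientations follow by symmetry. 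Without some such argument (or an explicit appeal to \cite{BRV06} for the closure computation), your write-up describes a proof plan rather than a proof.
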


\begin{lem}\cite{BRV06}\label{lem:wedge1}
The permutation class $Av(B)$ contains only finitely many wedge simple
permutations of type 1 if and only if $B$ contains an element of every
symmetry of the class $Av(1243, 1324, 1423, 1432, 2431, 3124, 4123,
4132, 4231, 4312)$.
\end{lem}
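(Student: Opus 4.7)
The proof is structurally parallel to Lemma~\ref{lem:alternation}: identify a permutation class $\mathcal{C}$ whose relationship to wedge simple permutations of type~1 mirrors that of $Av(123,2413,3412)$ to parallel alternations. The claim is that $\mathcal{C} = Av(1243,1324,1423,1432,2431,3124,4123,4132,4231,4312)$ plays this role, and the ``every symmetry'' formulation is forced by the dihedral action on permutation diagrams.

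First I would exploit the $8$-fold symmetry of the square acting on diagrams. Wedge simple permutations of type~1 come in eight ``orientations'' obtained by applying the four rotations and four reflections to one canonical orientation. Then $Av(B)$ contains only finitely many wedge simples of type~1 if and only if, for each of the $8$ orientations, only finitely many wedge simples of that orientation lie in $Av(B)$. Under the $8$ symmetries this splits the problem into $8$ parallel subproblems, each of which involves a symmetric copy of the same condition on $B$.

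Second I would fix the canonical orientation and describe the wedge simples of type~1 in that orientation as a parametric family, indexed by finite sequences of local choices. The key structural fact to extract from \cite{BRV06} is that these permutations can be viewed as finite sub-patterns of an infinite ``wedge'' configuration $W_\infty$, so they admit a natural encoding. Within this framework I would prove the following dichotomy: a pattern $\pi$ is contained in all but finitely many canonical wedge simples if and only if $\pi \notin \mathcal{C}$; equivalently, $\pi$ is avoided by infinitely many canonical wedge simples iff $\pi \in \mathcal{C}$.

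Combining the two observations yields the lemma: $Av(B)$ contains infinitely many wedge simples of some fixed orientation iff $B \subseteq \mathcal{C}$ for the corresponding symmetric copy of $\mathcal{C}$, and ranging over the $8$ orientations produces the ``every symmetry'' condition. The main obstacle is the dichotomy in the second step. Concretely, one must check each of the $24$ length-$4$ patterns against the canonical wedge structure and confirm that the $10$ listed patterns are exactly the minimal obstructions, i.e.\ those that occur in all sufficiently long canonical wedge simples while their proper sub-patterns of length $\leq 3$ do not. The symmetry reduction and the ``finitely many iff an obstruction is hit'' implication are then routine.
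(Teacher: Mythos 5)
First, a point of reference: the paper does not prove this lemma at all --- it is imported verbatim from \cite{BRV06}, and the paper explicitly declines even to define wedge simple permutations. So the comparison is with the argument in \cite{BRV06}, whose shape your sketch does reproduce (reduce by the eight symmetries of the square, describe the canonical wedge configuration, characterize its patterns, check length-$4$ minimal obstructions). However, your key dichotomy is inverted, and the error propagates to a conclusion that is not the lemma. The class $\mathcal{C}=Av(1243,1324,1423,1432,2431,3124,4123,4132,4231,4312)$ is (a symmetric copy of) the \emph{downward closure} of the canonical wedge simple permutations of type 1, i.e.\ the set of permutations embeddable in the wedge; hence $\pi$ is contained in all sufficiently long canonical wedge simples iff $\pi\in\mathcal{C}$, and $\pi$ is avoided by all of them iff $\pi\notin\mathcal{C}$ --- exactly the opposite of what you assert. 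Your version is impossible on structural grounds: the set of permutations avoided by infinitely many canonical wedge simples is upward closed under $\leq$ and does not contain the permutation $1$, whereas $\mathcal{C}$, being a class $Av(\cdot)$, is downward closed and contains $1$. Relatedly, your final check as written (``the ten patterns occur in all sufficiently long canonical wedge simples while their proper subpatterns of length $\leq 3$ do not'') is internally inconsistent, since containment of a pattern forces containment of all its subpatterns; the correct statement is that the ten patterns are the \emph{minimal permutations not embeddable} in the canonical wedge.

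Because of this inversion, your combination step yields the wrong criterion. You conclude that $Av(B)$ contains infinitely many wedge simples of a fixed orientation iff $B\subseteq\mathcal{C}$ (in the corresponding symmetry); negating over the orientations gives ``for every symmetric copy, $B$ contains an element \emph{outside} it'', which is not the lemma's ``$B$ contains an element \emph{of} every symmetric copy''. A concrete test separates them: for $B=\{1\}$ the class $Av(1)$ is trivial and contains no wedge simples, yet $1$ lies in every symmetric copy of $\mathcal{C}$, so your criterion would predict infinitely many. The correct chain is: if some $\beta\in B$ lies in the symmetric copy $\mathcal{C}_o$, then $\beta$ embeds in every sufficiently long wedge simple of orientation $o$ (the canonical wedge simples of a fixed orientation form an increasing chain under $\leq$, up to a bounded offset), so only finitely many of them avoid $B$; if instead $B\cap\mathcal{C}_o=\emptyset$, then no element of $B$ embeds in any wedge simple of orientation $o$ and all of them lie in $Av(B)$. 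Hence $Av(B)$ has finitely many wedge simples of type 1 iff $B$ meets every symmetric copy of $\mathcal{C}$, which is the statement. Fixing your sketch therefore requires swapping the roles of $\mathcal{C}$ and its complement throughout, and supplying the chain/closure facts about the canonical wedge family that make ``avoided by infinitely many'' equivalent to ``embeddable in none''.
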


\begin{lem}\cite{BRV06}\label{lem:wedge2}
 The permutation class $Av(B)$ contains only finitely many wedge simple
 permutations of type 2 if and only if $B$ contains an element of every
 symmetry of the class $Av(2134, 2143,$ $ 3124, 3142, 3241, 3412, 4123,
 4132, 4231, 4312)$.
\end{lem}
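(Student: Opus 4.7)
The plan is to derive the lemma from a structural characterization of wedge simple permutations of type 2 in the same spirit as Lemma~\ref{lem:wedge1} for type 1. First, I would go back to \cite{BRV06} and make explicit the definition of a wedge simple permutation of type 2: these form, up to the dihedral symmetries of the square, a family that in each ``canonical orientation'' $o$ is parameterized by size, with each member containing all smaller members of the same orientation as a pattern (a chain-like family). Call the collection of wedge simples of type 2 in canonical orientation $o$ the set $W_o$, and let $C$ denote the class $Av(2134, 2143, 3124, 3142, 3241, 3412, 4123, 4132, 4231, 4312)$, with $C_o$ its image under the corresponding symmetry.

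The heart of the argument is a structural identity: the pattern downset $D_o = \{\pi : \pi \leq \sigma \text{ for some } \sigma \in W_o\}$ coincides with $C_o$. I would prove this by two verifications. For the inclusion $D_o \subseteq C_o$, I would check by direct case analysis on the geometry of a wedge of type 2 that none of the ten listed four-element patterns can be realized as a sub-pattern of any wedge simple of type 2 in canonical orientation. For the reverse inclusion $C_o \subseteq D_o$, I would exhibit, for any permutation $\pi$ avoiding the ten patterns, an embedding of $\pi$ into a sufficiently large wedge simple of type 2 of the canonical orientation; the ``wedge plus tail'' structure of these permutations makes such an embedding constructible inductively on the length of $\pi$.

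Granted this structural identity, the lemma is a purely formal avoidance argument. Since $W_o$ is chain-like, ``$\beta$ is a pattern of some element of $W_o$'' is equivalent to ``$\beta$ is a pattern of all sufficiently large elements of $W_o$'', and hence $Av(B)$ contains infinitely many members of $W_o$ if and only if no $\beta \in B$ is a pattern of any member of $W_o$, i.e.\ $B \cap C_o = \emptyset$. Taking the disjunction over all orientations $o$ and negating yields: $Av(B)$ contains only finitely many wedge simple permutations of type 2 if and only if $B \cap C_o \neq \emptyset$ for every orientation $o$, which is exactly the stated condition.

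The main obstacle will be the combinatorial verification of the structural identity $D_o = C_o$, and in particular confirming that the precise list of ten length-four basis elements in the statement is correct and complete: one must show that these ten patterns are exactly the minimal obstructions to embedding into a canonical wedge simple of type 2, neither missing a forbidden pattern nor including a superfluous one. This is essentially a careful geometric bookkeeping, but it is the place where all the specificity of the lemma lives; once it is done, the remainder of the proof is, as above, formal.
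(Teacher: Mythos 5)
First, note that the paper does not prove Lemma~\ref{lem:wedge2} at all: it is imported verbatim from \cite{BRV06}, so there is no in-paper proof to compare against. Your outline does follow the same general strategy as the original argument in \cite{BRV06} (identify the pattern downset of the type-2 wedge simple permutations of a fixed orientation with a class given by finitely many length-4 basis elements, then use the fact that every permutation in that downset occurs in all sufficiently long type-2 wedge simples of that orientation to convert ``infinitely many in $Av(B)$'' into ``$B$ misses the downset'').

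However, as written your proposal has a genuine gap: the entire content of the lemma is concentrated in the structural identity $D_o = C_o$, and you explicitly defer it (``I would check by direct case analysis\dots'', ``I would exhibit an embedding\dots'', ``the main obstacle will be the combinatorial verification''). Neither direction is carried out: you do not verify that type-2 wedge simples avoid the ten listed patterns, nor do you construct the embedding of an arbitrary element of $C_o$ into a large type-2 wedge simple, nor do you even fix the definition of a wedge simple permutation of type 2, on which both verifications depend. The same applies to the ``chain-like'' property (every element of the downset is contained in all sufficiently long members of $W_o$) and to the implicit assumption that the orientations of type-2 wedge simples are in exact correspondence with the symmetric images of $C$, which is what licenses the final ``for every symmetry'' phrasing. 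The surrounding formal argument (finiteness of $Av(B)\cap W_o$ iff $B\cap C_o\neq\emptyset$, then intersecting over orientations) is fine and matches \cite{BRV06}, but with the central identification left as a to-do the proposal reduces the lemma to an unproven claim of essentially the same difficulty, so it cannot be accepted as a proof.
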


With these lemmas, it is possible to decide in polynomial time whether a class contains a finite number of wedge simple permutations or of parallel alternations. More precisely, we have:

\begin{lem}
\label{lem:complexity}
Testing whether a finitely based class $Av(B)$ contains finitely many parallel alternations (resp. wedge simple permutations of type $1$, resp. wedge simple permutations of type $2$) can be done in $\mathcal{O}(n \log n)$ time, where $n = \sum_{\pi \in B} |\pi|$.
\end{lem}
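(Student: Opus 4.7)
The plan is to combine the three preceding lemmas, which reduce each finiteness question to a bounded number of pattern-avoidance tests against a fixed small basis. By Lemmas~\ref{lem:alternation}, \ref{lem:wedge1} and~\ref{lem:wedge2}, for each of the three questions there is a fixed class $\mathcal{C}_0$ with a finite basis $B_0$ consisting of at most $10$ patterns, each of length at most $4$, such that $Av(B)$ contains only finitely many permutations of the given kind if and only if for each of the $8$ geometric symmetries $s$ of the square (generated by reverse, complement and inverse), $B$ contains at least one element lying in $s(\mathcal{C}_0)$.

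First I would make the algorithm explicit. For each $\pi \in B$ and each of the $8$ symmetries $s$, compute $s^{-1}(\pi)$ in $O(|\pi|)$ time, and then test whether it avoids every pattern of $B_0$. Maintain, for each of the $8$ symmetries, a flag that is set the first time some $\pi \in B$ is found to lie in $s(\mathcal{C}_0)$. After processing all of $B$, return ``yes'' if and only if all $8$ flags are set; the correctness follows directly from the lemma applied to the question at hand.

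For the complexity, the total work per $\pi \in B$ amounts to a constant number (at most $8 \times 10 = 80$) of pattern-avoidance tests on a permutation of length $|\pi|$, where each pattern has length at most $4$. Testing avoidance of a length-$3$ pattern can be done in $O(|\pi|)$ time by classical stack-based techniques, and testing avoidance of a length-$4$ pattern in $O(|\pi| \log |\pi|)$ time using known efficient small-pattern matching algorithms. Summing over $\pi \in B$ then gives the total bound $O\bigl(\sum_{\pi \in B} |\pi| \log |\pi|\bigr) = O(n \log n)$.

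The main obstacle is securing the $O(k \log k)$ bound for length-$4$ pattern matching in a permutation of length $k$; this is not derived here but handled by invoking existing efficient algorithms for matching patterns of small fixed length. Once this ingredient is accepted, the complexity bound is immediate because every other operation (applying a symmetry, flag maintenance) contributes only linear overhead per element of $B$.
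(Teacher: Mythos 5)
Your proposal is correct and follows essentially the same route as the paper: reduce via Lemmas~\ref{lem:alternation}, \ref{lem:wedge1} and~\ref{lem:wedge2} to a constant number of avoidance tests of patterns of length at most $4$ per element of $B$ and per symmetry, and invoke the known $\mathcal{O}(|\pi|\log|\pi|)$ algorithm for detecting patterns of length at most $4$ (the paper cites \cite{AAAH01} for exactly this ingredient). Your version merely spells out the algorithmic bookkeeping (applying the $8$ symmetries and maintaining flags) that the paper leaves implicit.
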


\begin{proof}
By Lemmas \ref{lem:alternation}, \ref{lem:wedge1} and \ref{lem:wedge2}, deciding if a class $Av(B)$ contains a finite number of wedge simple permutations or parallel alternations is equivalent to checking if there exists an element of $B$ in every symmetric class of special pattern avoiding permutation classes, where the bases are composed only of permutations of length at most $4$. From \cite{AAAH01} checking whether a permutation $\pi$ avoids some patterns of length at most $4$ can be done in ${\mathcal O}(|\pi| \log |\pi|)$. This leads to a ${\mathcal O}(n \log n)$ algorithm for deciding whether the numbers of parallel alternation and of wedge simple permutations in the class are finite.
\end{proof}

In \cite{BRV06} Brignall {\it et al.} also proved that it is
decidable to know if a class contains a infinite number of proper
pin-permutations using language theoretic arguments.  Analyzing their
procedure, we can prove that it has an exponential complexity due to
the resolution of a co-finiteness problem for a regular language given
by a non-deterministic automaton.
As said before our goal in this paper is to solve this same problem
in polynomial time for the wreath-closed classes.

\section{Pattern containment and pin words} \label{sec:containment}

In this section we show how to transform into a factor relation between words the pattern containment
relation of a simple permutation pattern in a proper pin-permutation. More precisely, let $Av(B)$ be a
finitely based wreath-closed class of permutations, that is to say
such that its basis $B$ is finite and contains only simple
permutations. We prove that the set of strict pin words corresponding
to permutations that contain an element of $B$ is characterized as the
set of all strict pin words whose images by a particular bijection
(denoted by $\phi$ in the sequel) contain some factors.

First recall the definition of the partial order $\preceq $ on pin
words introduced in \cite{BRV06}.  

\begin{defi}
  Let $u$ and $w$ be two pin words.  We decompose $u$ in terms of its
  strong numeral-led factors as $u = u^{(1)} \ldots u^{(j)}$, {\em a
    strong numeral-led factor} being a sequence of contiguous letters
  beginning with a numeral and followed by any number of directions
  (but no numerals).  We then write $u \preceq w$ if $w$ can be
  chopped into a sequence of factors $w=v^{(1)}w^{(1)} \ldots
  v^{(j)}w^{(j)}v^{(j+1)}$ such that for all $i \in \{1,\ldots, j\}$:
\begin{itemize}
\item if $w^{(i)}$ begins with a numeral then $w^{(i)} = u^{(i)}$, and
\item if $w^{(i)}$ begins with a direction, then $v^{(i)}$ is
nonempty, the first letter of $w^{(i)}$ corresponds to a point lying
in the quadrant specified by the first letter of $u^{(i)}$, and all
other letters in $u^{(i)}$ and $w^{(i)}$ agree.
\end{itemize}
\label{def:preceq}
\end{defi}

This order is closely related to the pattern containment order $\leq$ on
permutations.
\begin{lem}\cite{BRV06}\label{csq ordre}
If the pin word $w$ corresponds to the permutation $\sigma$ and $\pi
\leq \sigma$ then there is a pin word $u$ corresponding to $\pi$ with
$u \preceq w$.  Conversely if $u \preceq w$ then the permutation
corresponding to $u$ is contained in the permutation corresponding to
$w$.
\end{lem}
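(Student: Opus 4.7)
The plan is to prove the two directions separately: the converse, which is a purely combinatorial/pin-geometric implication, is handled first, and then the forward direction by an explicit construction of a pin word $u$ for $\pi$ from the embedding of $\pi$ in $\sigma$.

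For the converse, assume $u \preceq w$ with the decomposition $w = v^{(1)} w^{(1)} \ldots v^{(j)} w^{(j)} v^{(j+1)}$ of Definition \ref{def:preceq}. The positions in $w$ occupied by the letters of the $w^{(i)}$'s, say $t_1 < \ldots < t_k$, determine a subsequence of pins $(p_{t_1},\ldots,p_{t_k})$ in the pin representation of $\sigma$. I would show by induction on $k$ that this subsequence is itself a pin representation of the permutation encoded by $u$: within a group $w^{(i)}$, the agreement of all letters beyond the first with those of $u^{(i)}$ transfers the corresponding separation relationships (these are robust when restricting to a subset of the earlier pins), and the first letter of $w^{(i)}$, whether a numeral equal to that of $u^{(i)}$ or a direction whose quadrant matches, correctly locates the first pin of the group relative to the bounding box of the previously-selected pins. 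Hence the pins at positions $t_1, \ldots, t_k$ of $\sigma$ are order isomorphic to the diagram of the permutation of $u$, which gives pattern containment.

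For the forward direction, fix an embedding $i_1 < \ldots < i_k$ realizing $\pi \leq \sigma$. The subsequence $(p_{i_1},\ldots,p_{i_k})$ is itself a valid pin representation of $\pi$: each $p_{i_j}$ lies outside the bounding box of its selected predecessors because that box is contained in the bounding box of $\{p_1,\ldots,p_{i_j-1}\}$, and $p_{i_j}$ is either on the separating side of the selected pins (a direction) or falls into one of the four corner quadrants (a numeral). I then build $u$ position by position: set $u_1$ to a numeral giving the quadrant of $p_{i_1}$ relative to a suitably chosen origin; and for $j \geq 2$, set $u_j = w_{i_j}$ when $i_j = i_{j-1}+1$ and $w_{i_j}$ is a direction, otherwise set $u_j$ to the numeral describing the quadrant of $p_{i_j}$ relative to the bounding box of the earlier selected pins.

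Finally, I check $u \preceq w$ using the decomposition in which each $w^{(i)}$ is the substring of $w$ covering a maximal run of selected positions contributing to a single strong numeral-led factor of $u$, and the $v^{(i)}$'s fill in the gaps. Equal lengths and the direction agreements in the non-initial positions of each $w^{(i)}$ are immediate from the construction. When $w^{(i)}$ begins with a direction, the construction of $u$ forces the immediately preceding $w$-position to be unselected (otherwise that pin would have continued the previous group as a direction), so $v^{(i)}$ is nonempty; and the numeral chosen for $u^{(i)}$'s first letter lies in the quadrant indicated by that direction, since for instance $w_{i_j} = U$ places $p_{i_j}$ above all earlier pins, and hence above all earlier selected pins. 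The main obstacle is orchestrating this compatibility check together with the verification that $u$ is itself a well-formed pin word for $\pi$ (including the initial choice of origin $p_0'$ so that the first letters make sense); both reduce to a short case analysis on the letters of $w$ at group boundaries.
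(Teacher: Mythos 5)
The paper does not prove this lemma at all: it is imported verbatim from \cite{BRV06} with a citation, so there is no in-paper argument to compare against. Your two-direction proof is essentially the standard one (and, in outline, the one given in \cite{BRV06}), and I see no gap that would make it fail. Both directions rest on the same geometric fact, which you use but leave implicit, and which is worth stating and proving once: if $p_t$ separates $p_{t-1}$ from $\{p_0,\ldots,p_{t-2}\}$ and lies outside their bounding box, then with respect to \emph{any} nonempty subset $S\subseteq\{p_0,\ldots,p_{t-2}\}$ the pin $p_t$ is independent and lies in a well-defined corner quadrant of the bounding box of $S$ (because one of the two lines through $p_t$ has all of $S$ strictly on one side and the other line has all preceding pins strictly on one side); and corner-quadrant membership, like separation, is inherited by subsets. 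This single observation is what justifies, in your converse direction, that deleting the pins of the $v^{(i)}$'s still leaves a pin representation in which the first pin of each group acts as the numeral $u^{(i)}_1$, and, in your forward direction, that a direction letter of $w$ whose predecessor is unselected legitimately becomes a numeral of $u$ equal to the quadrant you assign; it also gives the equality $w^{(i)}=u^{(i)}$ in the numeral-led case, since a pin in a corner of the big bounding box sits in the same corner of any contained box. The only other point needing care is the one you flag yourself: the origin $p_0'$ for $u$ must be chosen inside the bounding box of the unselected prefix (e.g.\ $p_0$ itself when $w^{(1)}$ starts the word), so that the first letter of $u$ matches and the later separations of $u$ remain valid when $p_0'$ is adjoined; your reduction of this to a case analysis at group boundaries is the right way to finish. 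So your proposal is sound; it simply supplies the proof that the paper delegates to \cite{BRV06}.
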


In what follows, $\sigma$ is a proper pin-permutation. So we can
choose a strict pin word $w$ that encodes $\sigma$ (see Remark
\ref{rem:proper_strict} b.). As a consequence of Lemma \ref{csq ordre},
checking whether a permutation $\pi$ is a pattern of $\sigma$ is
equivalent to checking whether there exists a pin word $u$
corresponding to $\pi$ with $u \preceq w$. Additionally, we show that
when $\pi$ is simple, we can associate to each strict
(resp. quasi-strict) pin word $v= v_1 v_2 \ldots v_n$ of $\pi$ a word
$\phi(v)$ (resp.  $\phi(v_2 \ldots v_n)$) that does not contain
numerals and such that the pattern involvement problem is equivalent
to checking if $\phi(w)$ has a factor of the form $\phi(v)$ for $v$
strict or $\phi(v_2 \ldots v_n)$ for $v$ quasi-strict encoding $\pi$.

\begin{defi}
Let $M$ be the set of words of length greater than or equal to $3$
over the alphabet ${L,R, U, D}$ such that ${R,L}$ is followed by
${U,D}$ and conversely. 

We define a bijection $\phi$ from $\SP$ to $M$ as follows. For any
strict pin word $u \in \SP$ such that $u=u' u''$ with $|u'|=2$,
we set $\phi (u) = \varphi(u') u''$ where $\varphi$ is given by:
\begin{center}
\begin{tabular}{|c||c||c||c|}
\hline 
$1R\mapsto RUR$ & $2R\mapsto LUR$   & $3R\mapsto LDR$   & $4R\mapsto RDR$ \\ 
 $1L\mapsto RUL$ & $2L\mapsto LUL$ & $3L\mapsto LDL$  & $4L\mapsto RDL$\\
 $1U\mapsto URU$ & $2U\mapsto ULU$ & $3U\mapsto DLU$ &  $4U\mapsto DRU$ \\
$1D\mapsto URD$ & $2D\mapsto ULD$ & $3D\mapsto DLD$ & $4D\mapsto DRD$ \\
\hline
\end{tabular}
\end{center}
\end{defi}

For any $n \geq 2$, the map $\phi$ is a bijection from the set
$\SP_{n}$ of strict pin words of length $n$ to the set $M_{n+1}$ of words
of $M$ of length $n+1$. 
Furthermore, it satisfies, for any $u \in \SP$, $u_{i}=\phi (u)_{i+1}$ for any $i\geq 2$.

In the above table, we can notice that, for any $u \in \SP$, the first
two letters of $\phi (u)$ are sufficient to determine the first letter
of $u$ (which is a numeral). Thus it is natural to extend the
definition of $\phi$ to words of length $1$ (that do not belong to
\SP~by definition) by setting $ \phi(1) = \{UR,RU\}, \phi(2) =
\{UL,LU\}, \phi(3) = \{DL,LD\}$ and $\phi(4) = \{RD,DR\}$, and by
defining consistently $\phi^{-1}(v) \in \{1,2,3,4\}$ for any $v$ in
$\{LU,LD,RU,RD,UL,UR,DL,DR\}$.

Notice that our bijection consists of replacing the only numeral in
any strict pin word by two directions.  Lemma \ref{lem:quadrant} below
shows that for each strict pin word $w$, we know in which quadrant
lies every pin of the pin representation corresponding to $w$.

\begin{lem}\label{lem:quadrant}
  Let $w$ be a strict pin word and $p$ the pin representation
  corresponding to $w$. For any $i\geq 2$, set
$$q(w_{i-1},w_i) = \begin{cases} \phi^{-1}(w_{i-1}w_i) \textrm{ if }
  i\geq 3 \\
  \phi^{-1}(BC) \textrm{ if } i=2 \textrm{ and } \phi(w_1w_2) =
  ABC \end{cases}
$$
Then for any $i \geq 2$, $q(w_{i-1},w_i)$ is a numeral indicating the
quadrant in which $p_{i}$ lies with respect to $\{p_0, \ldots,
p_{i-2}\}$.
\end{lem}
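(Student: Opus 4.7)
The plan is to verify the lemma by unfolding the definitions and performing a short case analysis, exploiting the fact that the bijection $\phi$ is tailor-made so that the assertion holds. Throughout, denote by $B_k$ the bounding box of $\{p_0,\ldots,p_k\}$. Since $w$ is strict, only $w_1$ is a numeral and $w_2,\ldots,w_n$ are directions; moreover, by the forbidden factors $UU,UD,DU,DD,LL,LR,RL,RR$, consecutive directions must alternate between horizontal ($L,R$) and vertical ($U,D$).

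For the case $i \geq 3$, both $w_{i-1}$ and $w_i$ are directions, so $w_{i-1}w_i \in \{UR,UL,DR,DL,RU,RD,LU,LD\}$. The key geometric observation is in two steps. First, the bounding box $B_{i-1}$ has the same extent as $B_{i-2}$ along the axis of $w_i$: indeed, since $w_{i-1}$ is perpendicular to $w_i$ by alternation, $p_{i-1}$ lies outside $B_{i-2}$ in the $w_{i-1}$-direction while its $w_i$-axis coordinate is strictly within $B_{i-2}$'s range (this is exactly the separation of $p_{i-2}$ from $\{p_0,\ldots,p_{i-3}\}$ guaranteed by $w_{i-1}$). Consequently, ``$p_i$ outside $B_{i-1}$ in direction $w_i$'' forces $p_i$ outside $B_{i-2}$ in direction $w_i$. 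Second, the separation condition imposed by $w_i$ says that the line through $p_i$ perpendicular to $w_i$ separates $p_{i-1}$ from $\{p_0,\ldots,p_{i-2}\}$; since $p_{i-1}$ lies outside $B_{i-2}$ in the $w_{i-1}$-direction, $p_i$'s $w_{i-1}$-axis coordinate lies between $p_{i-1}$'s and the boundary of $B_{i-2}$, so $p_i$ is also outside $B_{i-2}$ in direction $w_{i-1}$. Combining these, $p_i$ lies in the quadrant of $B_{i-2}$ specified by the pair $(w_{i-1},w_i)$, which is precisely $\phi^{-1}(w_{i-1}w_i)$ by inspection of the table (e.g.\ $UR,RU \mapsto 1$, $UL,LU\mapsto 2$, $DL,LD\mapsto 3$, $DR,RD\mapsto 4$).

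The case $i=2$ is handled analogously, with $w_1$ (a numeral) playing the role of $w_{i-1}$. The letter $w_1$ tells us the quadrant of $p_1$ relative to $p_0$, so $B_1$ is the rectangle with $p_0$ and $p_1$ at opposite corners (the corner of $p_1$ being $w_1$). Then $w_2$ places $p_2$ outside $B_1$ in direction $w_2$, and the usual separation line through $p_2$ places $p_2$ on the $w_1$-side of $p_0$ in the axis perpendicular to $w_2$. A direct inspection of the sixteen pairs $(w_1,w_2)$ shows that the resulting quadrant of $p_2$ with respect to $\{p_0\}$ is exactly $\phi^{-1}(BC)$ where $\phi(w_1w_2)=ABC$; this is in fact the very reason $\phi$ was extended to single-letter words by $\phi(n)=\{AB,BA\}$ in the statement.

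The only ``hard'' part is recognizing that the statement is essentially by design: the table defining $\varphi$ replaces the numeral $w_1$ by the pair of directions one would write to reach $p_1$ from $p_0$ (first moving along the axis of $w_2$, then perpendicularly), in such a way that the ordered pair $(\phi(w)_i,\phi(w)_{i+1})$ always encodes the quadrant of $p_i$ with respect to $B_{i-2}$. Once this viewpoint is adopted, the case analysis is routine and the verification in each of the eight (respectively sixteen) cases is immediate.
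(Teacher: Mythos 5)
Your proof is correct and follows essentially the same route as the paper's, which also just unfolds the definitions and verifies the claim by case examination against the table defining $\varphi$; you merely make explicit and uniform the geometric content (separation forces $p_i$ beyond the bounding box of $\{p_0,\ldots,p_{i-2}\}$ in both the $w_{i-1}$- and $w_i$-directions) that the paper leaves to two representative cases. One cosmetic slip: for, say, $w_i=U$ the separating line through $p_i$ is the \emph{vertical} line, i.e.\ parallel to $w_i$ (perpendicular to the side of the bounding box named by $w_i$), not ``perpendicular to $w_i$''; the coordinate deduction you then draw from it is nevertheless the correct one.
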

\begin{proof}
  It is obvious that $q(w_{i-1},w_i)$ is a numeral. The fact that it
  indicates the claimed quadrant is proved by case examination,
  distinguishing the case $i\geq 3$ from $i=2$.

If $i \geq 3$, $w_{i-1}$ and $w_{i}$ are directions. For example if
$w_{i-1}=L$ and $w_{i}=U$, then $p_{i}$ lies in the quadrant 2 and
$\phi^{-1}(LU)=2$.
 
If $i=2$, $w_{i-1}$ is a numeral and $w_{i}$ is a direction. For
example if $w_{i-1}=1$ and $w_{i}=L$, then $p_{i}$ lies in the
quadrant~2 and we have $\phi(1L)=RUL$ and $\phi^{-1}(UL)=2$.
\end{proof}

By Remarks \ref{rem:simplepin} and \ref{rem:proper_strict} a., pin
words encoding simple permutations are either strict or quasi
strict. We first show how to interpret $\preceq$ by a factor relation
in the case of strict pin words.

\begin{lem}
\label{prop phi}
For any strict pin words $u$ and $w$, $u \preceq w$ if and only if $
\phi(u)$ is a factor of $ \phi(w)$.
\end{lem}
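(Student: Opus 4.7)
The plan is to use the fact that a strict pin word $u$ has exactly one strong numeral-led factor (namely $u$ itself), so Definition~\ref{def:preceq} with $j=1$ forces any witness of $u \preceq w$ to have the form $w = v^{(1)} w^{(1)} v^{(2)}$ with $|w^{(1)}| = |u|$; let $m$ denote the starting position of $w^{(1)}$ in $w$. My goal is to show that $u \preceq w$ with witness starting at $m$ is equivalent to $\phi(u)$ occurring as a factor of $\phi(w)$ starting at position $p = m$.

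The crucial observation is that direct inspection of the table defining $\varphi$ gives the uniform description $\varphi(nd) = xyd$, where $xy$ is the unique element of $\phi(n) \subseteq \{LU, LD, RU, RD, UL, UR, DL, DR\}$ whose second letter $y$ has axis type opposite to $d$. Combined with the fact that consecutive letters in a strict pin word always alternate between horizontal and vertical axes (the forbidden factors $UU, UD, DU, DD, LL, LR, RL, RR$ never appear) and with the identity $\phi(w)_j = w_{j-1}$ for $j \geq 3$, this makes it mechanical to rewrite $\varphi(u_1 u_2)$ and $\varphi(w_1 w_2)$ in terms of the letters of $w$ and of the quadrants of pins in $w$'s pin representation.

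For the forward direction, if $w^{(1)}$ starts with a numeral then $m = 1$, $w^{(1)} = u$ by strictness, and $\phi(u)$ is a prefix of $\phi(w)$. Otherwise $m \geq 2$ and $w^{(1)}$ starts with a direction; Definition~\ref{def:preceq} then says that $u_1$ matches the quadrant of $p_m$, which by Lemma~\ref{lem:quadrant} equals $\phi^{-1}(\phi(w)_m \phi(w)_{m+1})$, so the observation on $\varphi$ yields $\varphi(u_1 u_2) = \phi(w)_m \phi(w)_{m+1} \phi(w)_{m+2}$. Using $u_k = w^{(1)}_k = w_{m+k-1}$ for $k \geq 2$ and the identity $\phi(w)_j = w_{j-1}$ for $j \geq 3$ then gives $\phi(u) = \phi(w)_m \cdots \phi(w)_{m+|u|}$. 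The reverse direction runs the same argument backwards: from $\phi(u) = \phi(w)_p \cdots \phi(w)_{p+|u|}$, I set $m = p$ and read off letter-by-letter the required decomposition of $w$ and the conditions of Definition~\ref{def:preceq}.

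The main delicate point will be keeping the indexing straight in the small cases $m \in \{1, 2, 3\}$, where the factor overlaps the prefix $\varphi(w_1 w_2)$ of $\phi(w)$: one must notice that $\phi(w)_3$ is simultaneously the last letter of $\varphi(w_1 w_2)$ and equal to $w_2$, so that Lemma~\ref{lem:quadrant} can be applied uniformly (using its $i \geq 3$ branch for $m \geq 3$ and its $i = 2$ branch for $m = 2$).
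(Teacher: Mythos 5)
Your proposal is correct and follows essentially the same route as the paper's proof: decompose $w$ via the single strong numeral-led factor of $u$, handle the prefix case separately, and otherwise combine Lemma~\ref{lem:quadrant} with the structure of $\varphi$ and the identity $\phi(w)_j = w_{j-1}$ ($j\geq 3$) to match $\phi(u)$ against $\phi(w)$, reversing the argument for the converse. Your uniform reading of Lemma~\ref{lem:quadrant} as ``the quadrant of $p_m$ is $\phi^{-1}(\phi(w)_m\phi(w)_{m+1})$ for all $m\geq 2$'' is just a repackaging of the paper's case distinction $i\geq 3$ versus $i=2$, and the axis-alternation observation makes explicit what the paper dismisses as ``we easily check.''
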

\begin{proof}
  If $u \preceq w$, as $u$ is a strict pin word, writing $u$ in terms
  of its strong numeral-led factors leads to $u = u^{(1)}$, thus $w$
  can be decomposed into a sequence of factors $w =
  v^{(1)}w^{(1)}v^{(2)}$ as in  Definition
  \ref{def:preceq}.

  If $v^{(1)}$ is empty then $w^{(1)}$ begins with a numeral, $w^{(1)}
  = u^{(1)}$ and $u$ is a prefix of $w$. Consequently $ \phi(u)$ is a
  prefix of $ \phi(w)$.

  Otherwise $i=|v^{(1)}| \geq 1$ and $w^{(1)}$ begins with a
  direction. By Definition \ref{def:preceq}, the first letter
  $w_{i+1}$ of $w^{(1)}$ corresponds to a point $p_{i+1}$ lying in the
  quadrant specified by $u_{1}$ (the first letter of $u^{(1)}$), and
  all other letters (which are directions) in $u^{(1)}$ and $w^{(1)}$
  agree: $u_{2} \ldots u_{|u|} = w_{i+2} \ldots w_{i+|u|}$.

  By Lemma \ref{lem:quadrant}, $q(w_i,w_{i+1})$ is the quadrant in
  which $p_{i+1}$ lies, \emph{i.e.} $u_1 = q(w_i,w_{i+1})$. Since $|u|
  \geq 2$, by definition of $q$ we have that $\phi(u) =
  \phi(q(w_i,w_{i+1})w_{i+2} \ldots w_{i+|u|})$ is a factor of
  $\phi(w)$.

\medskip

Conversely if $ \phi(u)$ is a factor of $ \phi(w)$ then $
\phi(w)=v\, \phi(u)\, v'$.  If $v$ is empty then $ \phi(u)$ is a prefix
of $ \phi(w)$ thus $u$ is a prefix of $w$ hence $u \preceq w$.

If $|v|=i \geq 1$ then by definition of $\phi$, $u_2 \ldots u_{|u|}$
is a factor of $\phi(w)$ and more precisely appears in $\phi(w)$ for
indices from $i+3$ to $i+|u|+1$. This means that $u_2 \ldots u_{|u|} =
w_{i+2} \ldots w_{i+|u|}$. Since $i\geq 1$, $w_{i+1}$ is a direction,
and we are left to prove that the point $p_{i+1}$ corresponding to
$w_{i+1}$ lies in the quadrant indicated by $u_1$. By Lemma
\ref{lem:quadrant}, $p_{i+1}$ lies in quadrant $q(w_i,w_{i+1})$, and
we easily check that $q(w_i,w_{i+1}) = \phi^{-1}(xy)$ where $xy$ are
the first two letters of $\phi(u)$. Hence, we get that $q(w_i,w_{i+1})
=u_1$, concluding the proof.
\end{proof}

The second possible structure for a pin word corresponding to a simple
permutation is to begin with two numerals.

\begin{lem}
\label{prop semi}
Let $u$ be a quasi strict pin word and $w$ be a strict pin word.  
If  $u\preceq w$ then  $\phi(u_{2} \ldots u_{|u|})$ is
a factor of $\phi(w)$ which begins at position $p \geq 3$.
\end{lem}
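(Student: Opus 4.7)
The plan is to adapt the argument of Lemma \ref{prop phi} by focusing on the \emph{second} strong numeral-led factor of $u$. Since $u$ is quasi strict, its strong numeral-led factorization is $u = u^{(1)} u^{(2)}$ with $u^{(1)} = u_1$ and $u^{(2)} = u_2 u_3 \ldots u_{|u|}$. Applying Definition \ref{def:preceq} to $u \preceq w$ yields a decomposition $w = v^{(1)} w^{(1)} v^{(2)} w^{(2)} v^{(3)}$. I would first observe that $w^{(2)}$ must begin with a direction: as $w$ is strict, its only numeral is $w_1$, so if $w^{(2)}$ began with a numeral it would have to occupy position $1$, forcing the nonempty factor $w^{(1)}$ to lie before it, which is impossible. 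Hence, by Definition \ref{def:preceq}, $v^{(2)}$ is nonempty, the first letter of $w^{(2)}$ encodes a pin in the quadrant specified by $u_2$, and $w^{(2)}_i = u_{i+1}$ for every $i \geq 2$.

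Set $j = |v^{(1)} w^{(1)} v^{(2)}|$, so that $w^{(2)}$ starts at position $j+1$ of $w$. A short case split on $w^{(1)}$ shows $j \geq 2$: if $w^{(1)}$ starts with a numeral then $w^{(1)} = u_1$ has length $1$, $v^{(1)}$ is empty, and $|v^{(2)}| \geq 1$; otherwise $w^{(1)}$ starts with a direction, forcing $|v^{(1)}|, |w^{(1)}|, |v^{(2)}| \geq 1$. Using the identity $\phi(w)_{k+1} = w_k$ valid for $k \geq 2$ (applicable since $j \geq 2$), the length-$|u|$ window of $\phi(w)$ starting at position $j+1 \geq 3$ equals $w_j w_{j+1} \ldots w_{j+|u|-1}$, and I would identify this with $\phi(u^{(2)}) = \varphi(u_2 u_3) u_4 \ldots u_{|u|}$. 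The tail $w_{j+2} \ldots w_{j+|u|-1}$ coincides with $u_3 \ldots u_{|u|}$ directly from the agreement condition on $w^{(2)}$, and these are precisely the last $|u|-2$ letters of $\phi(u^{(2)})$.

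The step I expect to require the most care is matching $w_j w_{j+1}$ with the first two letters of $\varphi(u_2 u_3)$. By Lemma \ref{lem:quadrant} we have $\phi^{-1}(w_j w_{j+1}) = u_2$, which leaves two a priori possibilities for the pair $w_j w_{j+1}$ (for instance, both $RU$ and $UR$ decode to the numeral $1$). The pin-word alternation rule (no factor in $\{UU, UD, DU, DD, LL, LR, RL, RR\}$), applied at positions $(j+1, j+2)$ together with $w_{j+2} = u_3$, determines whether $w_j w_{j+1}$ is horizontal-then-vertical or vertical-then-horizontal according to the type of $u_3$; this singles out a unique pair, and a direct check against the table defining $\varphi$ confirms that this pair is $\varphi(u_2 u_3)_{1,2}$. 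The edge case $|u| = 2$, in which $\phi(u^{(2)}) = \phi(u_2)$ is the two-element set of length-$2$ preimages defined after the table, is handled at the same time: the conclusion $w_j w_{j+1} \in \phi(u_2)$ is immediate from Lemma \ref{lem:quadrant}.
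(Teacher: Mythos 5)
Your proof is correct and takes essentially the same route as the paper: decompose $u\preceq w$ via Definition~\ref{def:preceq} into $w=v^{(1)}w^{(1)}v^{(2)}w^{(2)}v^{(3)}$, observe that $w^{(2)}$ must start with a direction (so $v^{(2)}$ is nonempty and the factor sits at position at least $3$ after the shift by $\phi$), and identify the last letter of $v^{(2)}$ followed by $w^{(2)}$ with $\phi(u_2\ldots u_{|u|})$. The only difference is presentational: you spell out, via Lemma~\ref{lem:quadrant} and the alternation constraint (and the $|u|=2$ edge case), the identification that the paper compresses into the single assertion $w=v^{(1)}w^{(1)}v\,\phi(u^{(2)})\,v^{(3)}$.
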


\begin{proof}
  Decompose $u$ into its strong numeral-led factors $u =
  u^{(1)}u^{(2)}$. Notice that $u^{(2)} = u_{2} \ldots u_{|u|}$.
  Since $u\preceq w$, $w$ can be decomposed into a sequence of factors
  $w = v^{(1)}w^{(1)}v^{(2)}w^{(2)}v^{(3)}$ satisfying Definition
  \ref{def:preceq}. Moreover $|w^{(1)}|=|u^{(1)}|=1$ so $w^{(2)}$
  contains no numerals thus $v^{(2)}$ is non-empty, the first letter
  of $w^{(2)}$ corresponds to a point lying in the quadrant specified
  by the first letter of $u^{(2)}$, and all other letters in $u^{(2)}$
  and $w^{(2)}$ agree. Hence $w = v^{(1)}w^{(1)}v \phi
  (u^{(2)})v^{(3)}$ where $v$ is the prefix of $v^{(2)}$ of length
  $|v^{(2)}|-1$. Then $\phi (u^{(2)})$ is a factor of $w$ which has no
  numeral thus $\phi (u^{(2)})$ is a factor of $\phi (w)$ which begin
  at position $p \geq 3 $.
\end{proof}

\begin{lem}
 \label{prop semi2}
Let $u$ be a quasi strict pin word corresponding to a permutation
$\pi$ and $w$ be a strict pin word corresponding to a permutation
$\sigma$.  If $\phi(u_{2} \ldots u_{|u|})$ is
a factor of $\phi(w)$ which begins at position $p \geq 3$ then $\pi$ is
a pattern of $\sigma$.
\end{lem}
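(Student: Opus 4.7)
The plan is to reduce to Lemma~\ref{csq ordre} by producing a quasi-strict pin word $u'$ that encodes the same permutation $\pi$ as $u$ and satisfies $u' \preceq w$. The key move is to replace the first letter of $u$ by $w_1$: set $u' = w_1\, u_2 u_3 \cdots u_{|u|}$. The auxiliary point to justify is that $u'$ still encodes $\pi$. This rests on the fact that, in a quasi-strict pin word, the leading numeral only records the quadrant of $p_1$ relative to the arbitrary origin $p_0$: whatever $u_1$ is, the pin $p_2$ lies in quadrant $u_2$ of $p_1$, and an easy induction shows that the separations imposed by $u_3, \dots, u_{|u|}$ place each subsequent pin at the same relative position with respect to $p_1, \dots, p_{i-1}$ (the constraints involving $p_0$ only shrink, never shift, the admissible interval for each new pin). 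Hence the pattern of $(p_1, \dots, p_{|u|})$, which is $\pi$, is independent of the first numeral.

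With $u'$ in hand, I would verify $u' \preceq w$ through the decomposition
\[ w \;=\; w^{(1)}\, v^{(2)}\, w^{(2)}\, v^{(3)}, \]
with $w^{(1)} = w_1$, $v^{(2)} = w_2 w_3 \cdots w_{p-1}$, $w^{(2)} = w_p w_{p+1} \cdots w_{p+|u|-2}$, and $v^{(3)}$ the remaining suffix. Since $p \geq 3$, $v^{(2)}$ has length $p-2 \geq 1$ and is nonempty. Writing $u' = u'^{(1)} u'^{(2)}$ with $u'^{(1)} = w_1$ and $u'^{(2)} = u_2 u_3 \cdots u_{|u|}$, the match $w^{(1)} = u'^{(1)}$ is just numeral equality. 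For the match of $w^{(2)}$ with $u'^{(2)}$, I would translate the hypothesis via the identity $\phi(w)_j = w_{j-1}$ (valid for $j \geq 3$) into
\[ w_{p-1} = \varphi(u_2 u_3)_1, \qquad w_p = \varphi(u_2 u_3)_2, \qquad w_{p+k} = u_{k+2} \text{ for every } k \geq 1. \]
Inspection of the $\varphi$-table gives $\phi^{-1}\bigl(\varphi(u_2 u_3)_1 \varphi(u_2 u_3)_2\bigr) = u_2$, so Lemma~\ref{lem:quadrant} yields $q(w_{p-1}, w_p) = u_2$: the pin $p_p$ of the pin representation of $w$ lies in quadrant $u_2 = u'^{(2)}_1$, and the remaining letters of $w^{(2)}$ and $u'^{(2)}$ manifestly agree. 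All conditions of Definition~\ref{def:preceq} are thus satisfied, so $u' \preceq w$, and Lemma~\ref{csq ordre} gives $\pi \leq \sigma$.

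The delicate step is the auxiliary fact that the leading numeral of a quasi-strict pin word may be changed without altering the encoded permutation. Without it the decomposition may fail: for example with $w = 1URUR$ one computes $q(w_{k-1}, w_k) = 1$ for every $k \geq 2$, so for a quasi-strict $u$ with $u_1 \neq 1$ neither a numeral match $w_1 = u_1$ nor a direction match $q(w_{k-1}, w_k) = u_1$ can be found in the required prefix of $w$. Proving the auxiliary fact amounts to showing that, for any pin representation arising from a quasi-strict encoding, the origin $p_0$ may be placed in any of the four quadrants of $p_1$ while preserving all constraints dictated by $u_2, \dots, u_{|u|}$---the geometric content being precisely what Figure~\ref{fig:origine} illustrates for the first two pins.
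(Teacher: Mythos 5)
Your proof is correct, and it reaches the conclusion by a somewhat different route than the paper, although both arguments hinge on the very same key observation. The paper first applies Lemma~\ref{prop phi} to the strict pin word $u^{(2)}=u_2\ldots u_{|u|}$ to get $u^{(2)}\preceq w$, then works geometrically: it takes the points $\Gamma\subseteq\{p_3,\ldots,p_n\}$ of $w$'s pin representation realizing this occurrence, adjoins the point $p_1$, observes that $\{p_1\}\cup\Gamma$ is encoded by the quasi-strict word $i\,u^{(2)}$ (with $i$ the quadrant of $p_1$), and concludes because a quasi-strict pin word encodes the same permutation whatever its leading numeral is --- a fact the paper dismisses as ``easy to see''. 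You instead stay entirely on the word side: you change the leading numeral of $u$ to $w_1$, invoke that same numeral-irrelevance fact to keep the encoded permutation equal to $\pi$, and then exhibit an explicit factorization $w=w^{(1)}v^{(2)}w^{(2)}v^{(3)}$ witnessing $u'\preceq w$ (your index bookkeeping, the identity $\phi^{-1}\bigl(\varphi(u_2u_3)_1\varphi(u_2u_3)_2\bigr)=u_2$ and the appeal to Lemma~\ref{lem:quadrant} are all correct, and nonemptiness of $v^{(2)}$ is exactly where $p\geq 3$ is used), finishing with Lemma~\ref{csq ordre}; Lemma~\ref{prop phi} is not needed at all. What each approach buys: yours makes the matching of Definition~\ref{def:preceq} completely explicit and is self-contained at the level of words (essentially a quasi-strict analogue of the converse direction of Lemma~\ref{prop phi}), while the paper's is shorter because it recycles Lemma~\ref{prop phi} and transfers the remaining work to the pin representation of $\sigma$. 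In both cases the only informally treated step is the auxiliary fact that the first numeral of a quasi-strict pin word does not affect the encoded permutation; your sketch of it (each pin after the second lies outside the bounding box of $\{p_0,p_1\}$, and the constraints involving $p_0$ never change the relative order with respect to $p_1$) is at the same level of rigor as the paper's, so there is no genuine gap beyond what the paper itself leaves implicit.
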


\begin{proof}
Set $u^{(2)}=u_{2} \ldots u_{|u|}$. 
Since $\phi(u^{(2)})$ is a factor of $\phi(w)$ which begins at
position $p \geq 3$ then by Lemma \ref{prop phi}, $u^{(2)}\preceq
w$. Let $p_{1} \dots p_{n}$ be a pin representation of $w$ (which
corresponds to $\sigma$) and $\Gamma$ be the subset of points
corresponding to $u^{(2)}$, then $\Gamma \subseteq \{p_{3} \dots
p_{n}\}$. Let $\pi '$ be the permutation corresponding to $\{p_{1}\}
\cup \Gamma$, then $\pi' \leq \sigma$. We claim that $\pi'=\pi$.  Let
$i$ be the quadrant in which $p_{1}$ lies, and $v = i\,u^{(2)}$. Then
$v$ is a pin word corresponding to $\pi '$.
As $u$ begins with two numerals, there is $k \in \{1, \ldots, 4\}$
such that $u=k\,u^{(2)}$. It is easy to see that $v$ 
and $u$ encode the same permutation, even if $i\neq k$. Hence $\pi'=\pi$. 
\end{proof}

The set of pin words of any simple permutation $\pi$ contains at most
$64$ elements. Indeed by Lemma 4.6 of \cite{BBR09} there are at most
$8$ pin representations $p$ of $\pi$ --corresponding to the possible
choices of $(p_1, p_2)$-- and at most $8$ pin words for each pin
representation (see Figure~\ref{fig:origine}) so at most 64 pin words for $\pi$. 

We define $E(\pi) = \{
\phi(u)\,| u$ is a strict pin word corresponding to $\pi\} \cup \{ v
\in M\,| $ there is a quasi strict pin word $u$ corresponding to $\pi$
and $x\in \{LU,LD,RU,RD\}\uplus \{UL,UR,DL,DR \}$ such that $v=x\,
\phi(u_{2} \ldots u_{|u|}) \}$.  For the second set, the first letter
of $\phi(u_{2} \ldots u_{|u|})$ determines the set in which $x$ lies.

By Remarks \ref{rem:simplepin} and \ref{rem:proper_strict} a., the pin
words of $\pi$ are either strict or quasi strict, therefore $|E(\pi)|
\leq 64\times 4 = 256$.

\begin{theo}
\label{prop factor}
Let $\pi$ be a simple permutation and $w$ be a strict pin word
corresponding to a permutation $\sigma$. Then $\pi \nleq \sigma$ if
and only if $\phi(w)$ avoids the finite set of factors $E(\pi)$.
\end{theo}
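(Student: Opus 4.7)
The plan is to prove the contrapositive: $\pi\leq\sigma$ if and only if $\phi(w)$ contains some element of $E(\pi)$ as a factor. Since $\pi$ is simple, Remarks~\ref{rem:simplepin} and~\ref{rem:proper_strict} a.\ imply that every pin word encoding $\pi$ is either strict or quasi strict, so the two parts of the definition of $E(\pi)$ together account for all pin words of $\pi$. The argument then combines Lemma~\ref{csq ordre} with the translations of $\preceq$ into factor relations provided by Lemmas~\ref{prop phi}, \ref{prop semi} and~\ref{prop semi2}, splitting on whether the pin word of $\pi$ that is involved is strict or quasi strict.

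For the forward direction, assume $\pi\leq\sigma$. By Lemma~\ref{csq ordre}, there is a pin word $u$ of $\pi$ with $u\preceq w$. If $u$ is strict, Lemma~\ref{prop phi} yields $\phi(u)$ as a factor of $\phi(w)$, and $\phi(u)\in E(\pi)$ by definition. If $u$ is quasi strict, Lemma~\ref{prop semi} yields $\phi(u_{2}\ldots u_{|u|})$ as a factor of $\phi(w)$ starting at some position $p\geq 3$. Let $x$ be the length-two block of $\phi(w)$ occupying positions $p-2$ and $p-1$; the alternation constraints defining $M$ force $x$ to lie in precisely the one of the two sets $\{LU,LD,RU,RD\}$ or $\{UL,UR,DL,DR\}$ determined by the first letter of $\phi(u_{2}\ldots u_{|u|})$. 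Hence $x\,\phi(u_{2}\ldots u_{|u|})$ is a factor of $\phi(w)$ that belongs to $E(\pi)$. Conversely, if $\phi(w)$ contains some $v\in E(\pi)$ as a factor, then either $v=\phi(u)$ for a strict pin word $u$ of $\pi$, and Lemma~\ref{prop phi} together with Lemma~\ref{csq ordre} give $\pi\leq\sigma$; or $v=x\,\phi(u_{2}\ldots u_{|u|})$ for a quasi strict pin word $u$ of $\pi$ and some length-two $x$, in which case $\phi(u_{2}\ldots u_{|u|})$ is a factor of $\phi(w)$ starting at position $\geq 3$ (because the two letters of $x$ occupy the two preceding positions inside $\phi(w)$), and Lemma~\ref{prop semi2} delivers $\pi\leq\sigma$.

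The main point requiring care is the forward quasi strict case: one must verify that the two letters of $\phi(w)$ immediately preceding the occurrence of $\phi(u_{2}\ldots u_{|u|})$ form a block $x$ matching one of the pairs actually allowed in the definition of $E(\pi)$. This amounts to a compatibility between the alphabet restrictions imposed by membership in $M$ and the quadrant information carried by the first letter of $\phi(u_{2}\ldots u_{|u|})$, which is a direct consequence of Lemma~\ref{lem:quadrant} and the fact that strict pin words alternate between horizontal and vertical directions after their initial numeral.
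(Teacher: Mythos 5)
Your proposal is correct and follows essentially the same route as the paper's proof: split on whether the pin word $u$ of $\pi$ with $u\preceq w$ (from Lemma~\ref{csq ordre}) is strict or quasi strict, use Lemmas~\ref{prop phi}, \ref{prop semi} and~\ref{prop semi2} to pass between $\preceq$ and factor containment, and in the quasi strict forward case read off the two preceding letters $x$ in $\phi(w)$, whose admissibility follows from $\phi(w)\in M$. Your explicit check that $x$ lands in the correct one of the two letter sets in the definition of $E(\pi)$ is the same observation the paper makes immediately after defining $E(\pi)$.
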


Notice that it is enough to consider only one strict pin word
corresponding to $\sigma$ rather than all of them.

\begin{proof}
  If $\pi \leq \sigma$, then by Lemma \ref{csq ordre}, there is a pin
  word $u$ corresponding to $\pi$ with $u \preceq w$. By Remarks
  \ref{rem:simplepin} and \ref{rem:proper_strict} a., $u$ is a strict pin
  word or a quasi strict pin word. If $u$ is a strict pin word then,
  by Lemma \ref{prop phi}, $\phi(u)$ is a factor of $\phi(w)$ so
  $\phi(w)$ has a factor in $E(\pi)$. If $u$ is a quasi strict pin
  word then by Lemma \ref{prop semi}, $\phi(u_{2} \ldots u_{|u|})$ is
  a factor of $\phi(w)$ which begins at position $p \geq 3$. Let $x$
  be the two letters preceding $\phi(u_{2} \ldots u_{|u|})$ in
  $\phi(w)$. As $\phi(w) \in M$, $x\phi(u_{2} \ldots u_{|u|})$ is a
  factor of $\phi(w)$ that belongs to $E(\pi)$.

 Conversely suppose that $\phi(w)$ has a factor $v$ in $E(\pi)$. If
$v\in \{ \phi(u)\,| u$ is a strict pin word corresponding to $\pi\}$
then by Lemma \ref{prop phi}, there is a pin word $u$ corresponding to
$\pi$ with $u \preceq w$ so by Lemma \ref{csq ordre}, $\pi \leq
\sigma$. Otherwise there is a quasi strict pin word $u$ corresponding
to $\pi$ and $x\in \{LU,LD,RU,RD,UL,UR,DL,DR \}$ such that $v=x\,
\phi(u_{2} \ldots u_{|u|})$ is a factor of $\phi(w)$. Thus $\phi(u_{2} \ldots u_{|u|})$
is a factor of $\phi(w)$ which begins at position $p \geq 3$ and by
Lemma \ref{prop semi2}, $\pi \leq \sigma$.
\end{proof}

Returning to our motivation with respect to the number of proper
pin-permutations in $Av(B)$, the links between pattern containment
relation and pin words that we established yield Theorem~\ref{lem:wcclass}. 

\begin{theo}\label{lem:wcclass}
A wreath-closed class $Av(B)$ has arbitrarily long proper
pin-permutations if and only if there exist words of arbitrary length
on the alphabet $\{ L,R,U,D\}$ avoiding the set of factors $\cup_{\pi
\in B} E(\pi) \cup \{ LL,LR,RR,RL,UU,UD,DD,DU\}$.
\end{theo}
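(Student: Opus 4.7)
The plan is to deduce the statement directly from the machinery established in Section~\ref{sec:containment}, using $\phi$ and Theorem~\ref{prop factor} to translate the combinatorial question on proper pin-permutations into a factor-avoidance question on words over $\{L,R,U,D\}$. The key observation is that the hypothesis that $Av(B)$ is wreath-closed means every $\pi \in B$ is simple, so Theorem~\ref{prop factor} applies to each basis element.

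For the forward direction, I would proceed as follows. Assume $Av(B)$ contains proper pin-permutations of arbitrarily large length. For any such $\sigma$, Remark~\ref{rem:proper_strict}(b) gives a strict pin word $w$ encoding $\sigma$ with $|w|=|\sigma|$, and then $\phi(w)$ is a word of $M$ of length $|\sigma|+1$ over $\{L,R,U,D\}$. Because $\phi(w) \in M$, it already avoids all factors in $\{LL,LR,RR,RL,UU,UD,DD,DU\}$ by the very definition of~$M$. Moreover, since $\sigma$ avoids every simple permutation $\pi \in B$, Theorem~\ref{prop factor} yields that $\phi(w)$ avoids the set of factors $E(\pi)$ for each $\pi \in B$. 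Thus $\phi(w)$ avoids the whole union $\bigcup_{\pi \in B} E(\pi) \cup \{LL,LR,RR,RL,UU,UD,DD,DU\}$, and letting $|\sigma|$ grow produces arbitrarily long such words.

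For the converse, suppose there exist arbitrarily long words $v$ on $\{L,R,U,D\}$ avoiding the listed factors. Once $|v| \geq 3$, avoiding $\{LL,LR,RR,RL,UU,UD,DD,DU\}$ is exactly the condition defining $M$, so $v \in M$. Since $\phi$ is a bijection from $\SP_n$ onto $M_{n+1}$ for every $n \geq 2$, there is a unique strict pin word $w$ with $\phi(w)=v$, and $w$ encodes a proper pin-permutation $\sigma$ of length $|v|-1$ (by Remark~\ref{rem:proper_strict}(a), the pin representation associated to a strict pin word is proper). Applying the contrapositive of Theorem~\ref{prop factor} to each $\pi \in B$ (simple by the wreath-closed hypothesis), the fact that $\phi(w)=v$ avoids $E(\pi)$ gives $\pi \nleq \sigma$. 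Hence $\sigma \in Av(B)$, and letting $|v|$ grow produces arbitrarily long proper pin-permutations in the class.

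There is no serious obstacle: the theorem is essentially a packaging statement that combines Theorem~\ref{prop factor} (pattern containment $\leftrightarrow$ factor containment through $\phi$), the bijection $\phi : \SP \to M$, and Remark~\ref{rem:proper_strict} (strict pin word $\leftrightarrow$ proper pin representation). The only points requiring a moment of care are that one must invoke the wreath-closed hypothesis to ensure every $\pi \in B$ is simple (otherwise Theorem~\ref{prop factor} would not apply), and that one should handle the trivially small lengths by noting that the ``arbitrarily long'' quantifier lets us work in the range where $\phi$ and the definition of $M$ behave without edge cases.
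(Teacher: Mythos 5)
Your proof is correct and follows essentially the same route as the paper: the paper's argument is a compressed chain of equivalences using Remark~\ref{rem:proper_strict} and Theorem~\ref{prop factor}, and you simply unpack the two directions, making explicit the bijectivity of $\phi$ onto $M$ and the identification of $M$ with words avoiding the eight two-letter factors, which the paper leaves implicit.
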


\begin{proof}
The class $Av(B)$ contains arbitrarily long proper pin-permutations if
and only if there exist arbitrarily long proper pin-permutations which
have no pattern in $B$. That is --making use of Theorem \ref{prop
factor} and Remark \ref{rem:proper_strict} b.--, if and only if there
exist arbitrarily long strict pin words $w$ such that $\phi(w)$ avoids
the set of factors $\cup_{\pi \in B} E(\pi)$, or equivalently if and
only if there exist words of arbitrary length on the alphabet $\{
L,R,U,D\}$ which avoid the set of factors $\cup_{\pi \in B} E(\pi)
\cup \{ LL,LR,RR,RL,UU,UD,DD,DU\}$.
\end{proof}

\section{From the finiteness problem to a co-finiteness problem}
\label{sec:complexity} 

We are now able to give the general algorithm to decide if a
wreath-closed permutation class given by its finite basis $B$ contains a
finite number of proper pin-permutations (see Algorithm
\ref{alg:polynomial}). 

In this algorithm, ${\mathcal P}_B$ denotes the set of pin words that
encode the permutations of $B$, ${\mathcal L}({\mathcal P}_B)$ the
language of words on the alphabet $\{ L,R,U,D \}$ which contain as a
factor a word of $\cup_{\pi \in B} E(\pi)$ or one of the $8$ words
$LL,LR,RR,RL,UU,UD,DD$ and $DU$, $\mathcal A$ the automaton
recognizing ${\mathcal L}({\mathcal P}_B)$ and ${\mathcal A^c}$ the
automaton that recognizes the complementary language of ${\mathcal
  L}({\mathcal P}_B)$ in $\{L,R,U,D\}^{*}$. Notice that ${\mathcal
  A^c}$ recognizes the words $\phi(w)$ for $w$ a strict pin word
encoding a proper pin-permutation $\sigma \in Av(B)$.

\smallskip

\begin{algorithm}[H]
\SetKwData{B}{B}\SetKwData{PB}{{\mathcal P}_B}\SetKwData{Up}{up}
\SetKwFunction{PinWords}{{\sc PinWords}}\SetKwFunction{FindCompress}{FindCompress}
\SetKwInOut{Input}{{input}}\SetKwInOut{Output}{output}
\BlankLine
\Input{a set $B$ of simple permutations}
\Output{boolean : true if and only if $Av(B)$ contains only a finite number of proper pin-permutations}
\BlankLine
${\mathcal P}_B\leftarrow$\PinWords{$B$} \tcp{ Determine the set of pin words associated to the elements of $B$}
${\mathcal A}\leftarrow$ \textsc{Automaton}\texttt{(}${\mathcal
  L(\mathcal P}_B)$\texttt{)} \tcp{Build a complete deterministic automaton recognizing ${\mathcal L({\mathcal P}_B)}$}
\eIf{${\mathcal A^c}$ contains an accessible and co-accessible cycle}{\Return false}{\Return true} 

\caption{Deciding the finiteness of the number of proper
  pin-permutations }\label{alg:polynomial}
\end{algorithm}

\smallskip

The first part of this algorithm relies on the function {\sc Pinwords}
(described by Algorithm \ref{alg:pinword}) which computes the pin
words associated to a simple permutation.  It uses the fact that the
pin representations of a simple permutation, when they exist, are
always proper (see Remark \ref{rem:simplepin}), and that from Lemma
4.3 in \cite{BBR09}, the first two pins of a proper pin representation are
in {\em knight position} ({\it i.e.}, in a configuration like
\begin{tikzpicture}[scale=.15] \draw [help lines] (0,0) grid (3,2);
\fill (0.5,1.5) circle (6pt); \fill (2.5,0.5) circle (6pt);
\end{tikzpicture} or one of its $3$ symmetries under rotation and 
reflection). Next from two points in knight position, a proper pin 
representation, if it exists, can be efficiently computed using the
separation condition. Finally it remains to encode the pin representation by
pin words.

\begin{algorithm}
\SetKwData{B}{B}\SetKwData{Count}{Count}
\SetKwInOut{Input}{input}\SetKwInOut{Output}{output}
\BlankLine
\Input{a simple permutation $\sigma$}
\Output{The set $P$ of pin words encoding $\sigma$}
\BlankLine
\tcp{Count the number of ordered pairs of points in knight position}
 $E \leftarrow \varnothing$\; 
\ForEach{$\sigma_i$}{$E \leftarrow E \bigcup \{ (\sigma_i,\sigma_j)$ in knight position$\}$}
\tcp{If more than $48$ pairs are found, $\sigma$ is not a pin-permutation}
\If{$|E|>48$}{\Return $\varnothing$}
\tcp{Otherwise each knight may be the beginning of a pin representation
  of $\sigma$}
$P \leftarrow \varnothing$\;
\ForEach{$(\sigma_i,\sigma_j) \in E$}{$P \leftarrow P \bigcup \{$ 
pin words of the pin representation beginning with  $(\sigma_i,\sigma_j) \}$}
\Return{$P$}

\caption{{\sc Pinwords} function}\label{alg:pinword}
\end{algorithm}

\begin{lem}\label{lem:algo pwd}
 Algorithm \ref{alg:pinword} computes the set of pin words encoding a
 simple permutation $\sigma$ in linear time with respect to the length $n$ of $\sigma$.
\end{lem}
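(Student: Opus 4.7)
The proof naturally splits into a correctness argument and a complexity analysis, and I would treat them separately. For correctness, I would first observe that by Remark \ref{rem:simplepin} every pin representation of a simple permutation is proper, and by Lemma 4.3 of \cite{BBR09} (invoked just before the algorithm) the first two pins of any proper pin representation lie in knight position. Consequently every pin word of $\sigma$ arises by choosing an ordered knight pair $(\sigma_i,\sigma_j)\in E$ as $(p_1,p_2)$, extending this pair into a full pin representation of $\sigma$, and then reading off one of the at most $8$ encodings corresponding to the possible placements of the origin $p_0$. This is precisely what the inner loop enumerates, and conversely every pin representation it produces is valid, so the output equals the set of pin words of $\sigma$.

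To justify the short-circuit when $|E|>48$, I would rely on a structural bound asserting that a simple pin-permutation admits at most $48$ ordered knight pairs. This should follow from the classification in \cite{BBR09}: there are at most $8$ choices for an unordered starting pair of a proper pin representation, and a case analysis on the four knight configurations together with the symmetries relating them shows that every knight pair in $\sigma$ arises from one of these starting pairs, yielding the claimed constant bound. Establishing this bound cleanly is the main obstacle of the proof, because without it the loop could a priori iterate $\Theta(n)$ times and the linear complexity would break down.

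For the extension step, given a knight pair $(p_1,p_2)$, the remaining pins $p_3,p_4,\ldots$ are built inductively by exploiting the separation condition: each $p_i$ with $i\ge 3$ must sit outside the current bounding box and separate $p_{i-1}$ from $\{p_1,\ldots,p_{i-2}\}$. Using $\sigma$ represented by its two inverse arrays (value at position and position of value), identifying the candidate for $p_i$ reduces to a constant number of lookups in the four directions $U,D,L,R$, and if at any stage no valid $p_i$ exists while some points of $\sigma$ remain unused, the starting pair is discarded. Hence each candidate pair is processed in $\mathcal{O}(n)$ time, and encoding the resulting pin representation into its (at most $8$) pin words is also linear.

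Finally, computing $E$ takes $\mathcal{O}(n)$ time because for every $\sigma_i$ only $8$ positional offsets can yield a knight partner, each checked in $\mathcal{O}(1)$ with the array representation of $\sigma$. Either $|E|>48$ and the algorithm returns $\varnothing$ immediately, or $|E|\le 48$ and the main loop runs a constant number of times, each iteration costing $\mathcal{O}(n)$, giving the overall $\mathcal{O}(n)$ bound on the running time. The delicate point really is the structural bound on $|E|$; once it is in place, all other steps are straightforward to analyse.
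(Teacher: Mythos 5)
Your overall structure (correctness plus complexity, constant-time extension of a knight pair using $\sigma$ and $\sigma^{-1}$, at most $8$ encodings per pin representation, $\mathcal{O}(n)$ computation of $E$) matches the paper's proof. But the one step you yourself identify as ``the main obstacle'' --- the bound of $48$ on the number of ordered knight pairs of a simple pin-permutation --- is exactly where your argument has a genuine gap, and the route you sketch for it is not the right one. You propose to derive the bound from the fact that there are at most $8$ choices of starting pair $(p_1,p_2)$ and that ``every knight pair arises from one of these starting pairs.'' That claim is unsupported and, as stated, false in spirit: Lemma 4.3 of \cite{BBR09} says that starting pairs of proper pin representations are knights, not conversely that knights are starting pairs. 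Indeed a knight pair may involve the \emph{last} pin $p_n$ of a representation, which begins no pin representation at all; and if your claim were literally true it would give a bound of $16$ ordered pairs, not $48$, which already signals that the reasoning does not go through.

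The ingredient the paper actually uses is Lemma 4.4 of \cite{BBR09}: in \emph{any} pin representation $(p_1,\ldots,p_n)$ of a simple pin-permutation, every unordered knight pair contains at least one of the three points $p_1$, $p_2$ or $p_n$. Since a given point can be in knight position with at most $8$ other points, this yields at most $3\times 8=24$ unordered pairs, hence at most $48$ ordered pairs, and the contrapositive justifies returning $\varnothing$ when $|E|>48$. Without this (or an equivalent structural statement), the short-circuit test is unjustified and, as you note, the main loop could run $\Theta(n)$ times, destroying linearity. So the fix is to replace your sketched case analysis by an appeal to (or a proof of) this lemma; the rest of your argument then matches the paper's proof essentially verbatim.
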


\begin{proof}
Algorithm \ref{alg:pinword} can be decomposed into two parts. First, we count
the number of ordered pairs of points in knight position that should be smaller than
$48$.  Indeed from Lemma~4.4 of \cite{BBR09}, if $\sigma$ is a simple
pin-permutation of length $n$, in any of its pin representations $(p_1,\ldots,p_n)$,
every unordered pair of points $\{p_i,p_j\}$ that is a knight contains at least one of the points $p_1,p_2$
or $p_n$. As only $8$ points can be in knight position with a given
point, the permutation $\sigma$ has at most $24$ unordered pairs of points in knight position, hence at most $48$ ordered pairs $(p_i,p_j)$ that are knights.

Therefore given a simple permutation $\sigma$, we count the number of
ordered pairs of points in knight position. 
To do this, we take each
point $p$ of the permutation and we check if another point is in
knight position with $p$. As at most $8$ cells can contain a point in
knight position with $p$, this counting part runs in time $8
n$. 

If this number is greater than $48$, $\sigma$ is not a
pin-permutation. Otherwise, the second part of the algorithm computes,
for each ordered pair of points in knight position, the pin representation
beginning with it (if it exists) and its associated pin words. This
can also be done in linear time as there is at most one pin
representation of $\sigma$ beginning with a given ordered pair of points.
Indeed, because $\sigma$ is simple, its pin representations are always
proper (see Remark \ref{rem:simplepin}).  The pin representation
starting with a given knight is then obtained as follows. If
$(p_1,\ldots, p_i)$ has already been computed then, since the pin
representation we look for is proper, $p_{i+1}$ separates $p_i =
\sigma_k$ from previous points. It means that either it separates them
vertically, and then $p_{i+1}= \sigma_{k+1}$ or
$p_{i+1}=\sigma_{k-1}$, or it separates them horizontally and then its
value must be $\sigma_k \pm 1$. Therefore, if we compute $\sigma^{-1}$
in advance (which is easily done by a linear-time precomputation) we
are allowed to find the next point in a proper pin representation in
constant time.

Finally as at most $8$ pin words (choice of the origin, see Figure
\ref{fig:origine}) correspond to a given pin representation, computing
all pin words can easily be done in linear time from the pin
representation.
\end{proof}

\begin{lem}\label{lem: main algo} 
Algorithm \ref{alg:polynomial} tests if a wreath-closed
permutation class given by its finite basis $B$ contains a finite number
of proper pin-permutations in linear time with respect to $n = \sum_{\pi \in B} |\pi|$.
\end{lem}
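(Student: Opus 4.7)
The plan is to verify the correctness of Algorithm~\ref{alg:polynomial} using Theorem~\ref{lem:wcclass}, and then to bound the running time of each of its three steps separately.

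For correctness, I would start from Theorem~\ref{lem:wcclass}: $Av(B)$ contains arbitrarily long proper pin-permutations if and only if there exist words of arbitrary length over $\{L,R,U,D\}$ avoiding the finite set $F = \bigcup_{\pi\in B}E(\pi)\cup\{LL,LR,RR,RL,UU,UD,DD,DU\}$. By construction, $\mathcal A$ recognizes the language ${\mathcal L}({\mathcal P}_B)$ of words over $\{L,R,U,D\}$ that contain some factor of $F$; hence ${\mathcal A^c}$ recognizes exactly the words of $\{L,R,U,D\}^*$ avoiding $F$. A complete deterministic automaton accepts an infinite language if and only if it admits a cycle that is both accessible from the initial state and co-accessible from some final state, so the boolean returned in the last line of the algorithm is correct.

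For the complexity, I would treat each step in turn. First, by Lemma~\ref{lem:algo pwd}, {\sc PinWords} computes the set of pin words of each $\pi\in B$ in time $O(|\pi|)$, so ${\mathcal P}_B$ is produced in time $O(n)$; since each simple permutation admits at most $64$ pin words, each of length $|\pi|$, the set ${\mathcal P}_B$ itself has total length $O(n)$. Second, the set of forbidden factors $F$ is assembled from ${\mathcal P}_B$: for each strict pin word $u$ one adds $\phi(u)$, and for each quasi-strict $u$ one adds the (at most eight) words $x\,\phi(u_2\ldots u_{|u|})$ where $x$ ranges over the relevant two-letter prefixes. With at most $256$ elements of length $O(|\pi|)$ per permutation $\pi$, the total size of $F$ is $O(n)$. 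A complete deterministic automaton $\mathcal A$ recognizing ${\mathcal L}({\mathcal P}_B)$ is then obtained from the Aho--Corasick pattern-matching construction on $F$ in time linear in $|F|$, and swapping final and non-final states produces $\mathcal A^c$ in the same bound; using a constant alphabet is crucial here so that the number of transitions is linear in the number of states. Third, the infinity test reduces to a forward graph traversal from the initial state to mark accessible states, a backward traversal from the final states to mark co-accessible states, and a cycle detection on the subgraph they induce, each of which runs in time linear in the size of $\mathcal A^c$, hence in $O(n)$.

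The main obstacle is really the bookkeeping in the second step: one has to make sure that $|F|$ is $O(n)$ rather than growing with the number of distinct permutations times a polynomial in their length. The bounds needed to carry this out are precisely the constants collected earlier in the paper, namely at most $8$ pin representations and $8$ origin choices (so at most $64$ pin words) per simple permutation, at most $4$ completions by a two-letter prefix for the quasi-strict case (giving $|E(\pi)|\le 256$), and the constant alphabet $\{L,R,U,D\}$; once these are in hand, the rest of the analysis reduces to textbook facts about Aho--Corasick automata and reachability in directed graphs.
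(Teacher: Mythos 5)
Your proposal is correct and follows essentially the same route as the paper's own proof: compute ${\mathcal P}_B$ via Lemma~\ref{lem:algo pwd} in $O(n)$ time, build the complete deterministic Aho--Corasick automaton for the forbidden-factor language in linear time, complement it by swapping final and non-final states, and detect an accessible and co-accessible cycle by graph traversal, with correctness supplied by Theorem~\ref{lem:wcclass}. Your write-up is if anything slightly more explicit than the paper about why $|F|$ stays $O(n)$ and why infiniteness of the complement language is equivalent to the cycle condition, but these are the same ideas, not a different argument.
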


\begin{proof}
First according to Lemma~\ref{lem:algo pwd} the {\sc Pinwords}
function applied to the $|B|$ patterns of the basis runs in total time
$\mathcal{O}(n)$, and produces a set ${\mathcal P}_B$, containing at
most $|B|\cdot 48 \cdot 8$ words, whose lengths sum to
$\mathcal{O}(n)$.

Next a complete deterministic automaton ${\mathcal A}$ recognizing
${\mathcal L}({\mathcal P}_B)$ the set of words having a factor in
$\cup_{\pi \in B} E(\pi) \cup \{ LL,LR,RR,RL,UU,UD,DD,DU\}$ can be
built in linear time (w.r.t. $n$) using Aho-Corasick algorithm
\cite{AC75}. With this construction the number of states of the
resulting automaton is also linear. The automaton ${\mathcal A^c}$
that recognizes the complementary language of ${\mathcal L}({\mathcal
  P}_B)$ in $\{L,R,U,D\}^{*}$ is obtained by exchanging final and
non-final states of the initial automaton ${\mathcal A}$ which is
complete and deterministic.  Then it remains to test in the complete
deterministic automaton ${\mathcal A^c}$ whether there exists an
accessible cycle from which a path leads to a final state ({\em i.e.},
that is co-accessible). Making use of a depth-first traversal, this
step takes a linear time.  Hence checking if there exist arbitrarily
long words on $\{ L,R,U,D\}$ which avoid a finite set of factors can
be done in linear time -- linear in the sum of the lengths of the
factors. Together with Theorem~\ref{lem:wcclass} this concludes the proof.
\end{proof}

The preceding results allow us to decide in linear time if a
wreath-closed permutation class given by its finite basis contains
arbitrarily long proper pin-permutations. To end our proof, following
the same steps as \cite{BRV06}, we must deal with wedge simple
permutations and parallel alternations in order to decide if the
permutation class contains a finite number of simple
permutations. These results are summarized in the following theorem:

\begin{theo}\label{thm:thm}
 Let $Av(B)$ be a finitely based wreath-closed class of
 permutations. Then there exists an algorithm to decide in time
 ${\mathcal O}(n \log n)$ where $n=\sum_{\pi \in B} |\pi|$ whether
 this class contains finitely many simple permutations.
\end{theo}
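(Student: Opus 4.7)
The plan is to deduce the theorem directly from Theorem~\ref{thm:brignall} together with the complexity results already established, organizing the algorithm as three independent subroutines whose costs add up to the claimed $\mathcal{O}(n\log n)$ bound.

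First, I would invoke Theorem~\ref{thm:brignall}, which asserts that $Av(B)$ contains infinitely many simple permutations if and only if it contains either infinitely many parallel alternations, or infinitely many wedge simple permutations (of type~1 or of type~2), or infinitely many proper pin-permutations. Consequently, the decision problem splits into four finiteness tests (one per type of obstruction), and $Av(B)$ has finitely many simple permutations if and only if all four tests return \emph{finite}. Since this is a constant number of independent tests, the total running time is the maximum of their individual running times.

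Next, I would handle the alternation and wedge cases by a straight appeal to Lemma~\ref{lem:complexity}: each of the three corresponding finiteness tests runs in $\mathcal{O}(n\log n)$ time, where $n=\sum_{\pi\in B}|\pi|$. For the proper pin-permutation case, which is the genuinely new ingredient, I would apply Lemma~\ref{lem: main algo}, whose Algorithm~\ref{alg:polynomial} decides finiteness in $\mathcal{O}(n)$ time. Here is where the wreath-closed hypothesis is used: it guarantees that every $\pi\in B$ is simple, so that Theorem~\ref{prop factor} and Theorem~\ref{lem:wcclass} apply and the pin-permutation test reduces to checking co-finiteness of the language recognised by $\mathcal{A}^c$.

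Putting the four tests in sequence, the overall running time is dominated by the $\mathcal{O}(n\log n)$ cost of the alternation and wedge subroutines, yielding the announced bound. The only part that demands real work is Lemma~\ref{lem: main algo}, and that lemma has already been proved; here the argument is pure assembly. The main ``obstacle'', if any, is purely bookkeeping: one must observe that the three auxiliary tests of Lemma~\ref{lem:complexity} and the pin-permutation test may be executed independently on the same input $B$, and that $n=\sum_{\pi\in B}|\pi|$ is the common size parameter, so the additive composition of $\mathcal{O}(n\log n)+\mathcal{O}(n)$ remains $\mathcal{O}(n\log n)$.
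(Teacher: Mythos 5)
Your proposal is correct and follows exactly the paper's own argument: invoke Theorem~\ref{thm:brignall} to split the question into the parallel-alternation, wedge, and proper pin-permutation cases, apply Lemma~\ref{lem:complexity} for the first two (in ${\mathcal O}(n\log n)$ time) and Lemma~\ref{lem: main algo} for the last (in linear time), and note that the wreath-closed hypothesis is what makes the pin-permutation machinery applicable. The combination into an overall ${\mathcal O}(n\log n)$ bound matches the paper's proof.
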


\begin{proof}
  From Theorem \ref{thm:brignall}, we can look separately at parallel
  alternations, wedge simple permutations and proper
  pin-permutations. For parallel alternations and wedge simple
  permutations, Lemma \ref{lem:complexity} shows that testing if their
  number in $Av(B)$ is finite can be done in ${\mathcal O}(n \log n)$
  time. The case of proper pin-permutations can be solved with
  Algorithm~\ref{alg:polynomial}. From Lemma~\ref{lem: main algo}
  checking if there exist arbitrarily long proper pin-permutations in
  a wreath-closed permutation class can done in linear time -- linear
  in the sum of the lengths of the elements of the basis of the class
  -- concluding the proof.
\end{proof}

\paragraph*{Conjecture} We strongly believe that Theorem \ref{thm:thm} has a 
generalization to all finitely based permutation classes (and not only
wreath-closed classes): namely, we expect that the complexity of
deciding whether a finitely based permutation class contains a finite
number of simple permutations is polynomial, however of higher
degree. 

\medskip

This complexity gap we foresee for this generalization and the growing
importance of wreath-closed classes in the permutation patterns field
justify to our eyes the interest of the result proved in this article.
Furthermore, in the general case the pattern relation on permutations
cannot be translated into a factor relation on the pin words that
encode these permutations. However the substitution decomposition
of pin-permutations described in \cite{BBR09} should allow us to obtain
an efficient recursive algorithm.

\paragraph*{Open problem} 
By \cite{AA05}, containing a finite number of simple permutations is a
sufficient condition for a permutation class to have an algebraic
generating function. Our work allows to decide efficiently whether the
number of simple permutations in the class is finite, but does not
allow the computation of the \emph{set} of simple permutations in the
class. Describing an efficient (polynomial?) procedure solving this
question, and thereafter being able to compute algorithmically the
algebraic generating function associated to the class, would be
natural continuations of our work.

\paragraph*{Acknowledgement}
The authors wish to thank the referees for their helpful comments and
suggestions, that brought our work to a clarified presentation.

\bibliographystyle{plain} \bibliography{biblio}
\end{document}